\tikzset{cross/.style={cross out, draw=black, minimum size=2*(#1-\pgflinewidth), inner sep=0pt, outer sep=0pt},
cross/.default={1pt}}
\tikzstyle{int}=[draw, fill=blue!20, minimum size=2em]
\tikzstyle{dot}=[circle, draw, fill=blue!20, minimum size=2em]
\tikzstyle{dotred}=[circle, draw, fill=red!20, minimum size=2em]
\tikzstyle{init} = [pin edge={to-,thin,black}]
\tikzstyle{initred} = [pin edge={to-,thin,red}]
\tikzstyle{plan}=[draw, fill=blue!20, minimum size=2em, text width=5em, rounded corners,align=center]
\tikzstyle{planwide}=[draw, fill=blue!20, minimum size=2em, text width=8em, rounded corners,align=center]
\tikzstyle{nodedot}=[circle, draw, fill=white, minimum size=0.3cm,inner sep=0pt]
\tikzstyle{nodedot}=[circle, draw, fill=white, minimum size=3,inner sep=0pt]
\tikzstyle{Medge}=[green!60!black, thick]
\tikzstyle{Bedge}=[red, thick]
\tikzstyle{Cedge}=[blue, thick]
\tikzstyle{Sedge}=[black, thick]
\tikzstyle{Mgiantedge}=[green!60!black, line width=3.0pt]
\tikzstyle{Bgiantedge}=[red,line width=3.0pt]
\tikzstyle{Cgiantedge}=[blue,line width=3.0pt]
\tikzstyle{Sgiantedge}=[black,line width=3.0pt]
\tikzstyle{shadedgiantnode}=[circle, draw, fill=black!10, minimum size=1cm, inner sep=0pt]
\tikzstyle{unshadedgiantnode}=[circle, draw, fill=white, minimum size=1cm, inner sep=0pt]
\tikzset{my loop/.style =  {to path={
  \pgfextra{}
  [looseness=5,min distance=10mm]
  \tikz@to@curve@path},font=\sffamily\small
  }}  
\newcolumntype{C}[1]{>{\centering\arraybackslash}p{#1}}
\tikzstyle{vertexdot}=[circle, draw, fill=white, minimum size=3,inner sep=0pt]
\tikzstyle{root}=[circle, draw, fill=black, minimum size=3,inner sep=0pt]
\tikzstyle{vertexdotsolid}=[circle, draw, fill=black, minimum size=3,inner sep=0pt]
\pgfplotsset{
    standard/.style={
        axis x line=middle,
        axis y line=middle,
        every axis x label/.style={at={(current axis.right of origin)},anchor=north west},
        every axis y label/.style={at={(current axis.above origin)},anchor=north west}
    }
}
\theoremstyle{plain}
\newtheorem{theorem}{Theorem}
\newtheorem{lemma}{Lemma}
\newtheorem{corollary}{Corollary}
\theoremstyle{definition}
\newtheorem{definition}{Definition}
\newtheorem{problem}{Problem}
\newtheorem*{remark*}{Remark}
\newtheorem*{theorem*}{Theorem}
\newcommand{\argmin}{\mathop{\arg\min}}
\newcommand{\reals}{{\mathbb{R}}}
\newcommand{\Expect}{\mathbb{E}}
\newcommand{\Prob}{\mathbb{P}}
\newcommand{\prob}[1]{ \mathbb{P}\left\{ #1 \right\} }
\newcommand{\Binom}{{\rm Binom}}
\newcommand{\norm}[1]{\left\|{#1} \right\|}
\newcommand{\calA}{{\mathcal{A}}}
\newcommand{\bbD}{{\mathbb{D}}}
\newcommand{\opt}{\mathrm{OPT}}
\newcommand{\optD}{\mathrm{OPT}_{\mathbb{D}}}
\newcommand{\cost}{\mathrm{cost}}
\newcommand{\costD}{\mathrm{cost}_{\bbD}}
\newcommand{\regret}{\mathrm{Regret}_{\mathbb{D}}}
\newcommand{\std}{\mathrm{std}}
\newcommand{\com}{\mathrm{com}}
\title{Online Metric Matching: Beyond the Worst Case\thanks{A preliminary version of this paper is published in WINE'24 with title \emph{Stochastic Online Metric Matching: Adversarial is no Harder than Stochastic}.}}
\author{Mingwei Yang\thanks{Stanford University. Email: \texttt{mwyang@stanford.edu}} \and Sophie H.\ Yu\thanks{The Wharton School of Business, University of Pennsylvania. Email: \texttt{hysophie@wharton.upenn.edu}.
}}
\date{\today}
\begin{document}
\maketitle

\begin{abstract}
    We study the online metric matching problem.
    There are $m$ servers and $n$ requests located in a metric space, where all servers are available upfront and requests arrive one at a time.
    Upon the arrival of a new request, it needs to be immediately and irrevocably matched to an available server, resulting in a cost of their distance.
    The objective is to minimize the total matching cost.
    
    When servers are adversarial and requests are independently drawn from a known distribution, we reduce the problem to a more tractable setting where servers and requests are all independently drawn from the same distribution.
    Applying our reduction, for $[0, 1]^d$ with various choices of distributions, we achieve improved competitive ratios and nearly optimal regret in both balanced and unbalanced markets.
    In particular, we give $O(1)$-competitive algorithms for $d \geq 3$ in both balanced and unbalanced markets with smooth distributions.
    Our algorithms improve on the $O((\log \log \log n)^2)$ competitive ratio of Gupta et al. (ICALP'19) for balanced markets in various regimes, and provide the first positive results for unbalanced markets.
    Moreover, when servers and requests are all adversarial, and a prediction of request locations is provided, we present a general framework for transforming an arbitrary algorithm that does not use predictions into an algorithm that leverages predictions.
    The transformation applies the given algorithm in a black-box manner, and the performance of the resulting algorithm degrades smoothly as the prediction accuracy deteriorates while preserving the worst-case guarantee.
\end{abstract}
\tableofcontents

\section{Introduction}

The development of centralized matching markets has emerged as a crucial research area across operational research, economics, and computer science, driven by its diverse applications in kidney exchange programs, ride-sharing services, and online advertising platforms. Market designers in these domains face the fundamental challenge of making matching decisions under future uncertainty, with the primary objective of maximizing the total incurred value. 
The growing significance of these applications has sparked extensive research across multiple related areas. These include online bipartite matching \cite{DBLP:conf/stoc/KarpVV90,DBLP:journals/fttcs/Mehta13,huang2024onlinematchingbriefsurvey}, online resource allocation \cite{DBLP:journals/mansci/VeraB21,besbes2023dynamic,DBLP:journals/ior/BalseiroLM23,DBLP:journals/ior/BalseiroBP24,bray2024logarithmic,jiang2024degeneracy}, and dynamic matching \cite{DBLP:journals/ior/AndersonAGK17,akbarpour2020thickness,DBLP:conf/sigmetrics/WeiXY23,gupta2024greedy,kerimov2025achieving}.

A particularly compelling example of these applications can be found in the recent proliferation of ride-hailing platforms such as Didi, Uber, and Lyft.
In these systems, the platform must make an immediate matching decision when a customer requests a ride---significantly delaying or rejecting the request is typically not an option. The primary cost metric in each match is the distance a driver needs to travel to reach their assigned customer.
While a simple greedy approach matches each arriving customer to their nearest available driver, such myopic decisions can often lead to suboptimal outcomes.
For instance, after matching a driver to their nearest customer, the next customer request might arrive in that driver's original location, requiring a much longer pickup distance from the remaining available drivers.
Therefore, the platform's key challenge is to design matching policies that minimize the total pickup costs by considering both current matches and potential future request patterns, while maintaining the constraint of immediate matching for each request.

Inspired by the special features of the market dynamics underlying ride-hailing platforms, in this paper, we focus on the online (bipartite) minimum cost matching problem.
In this problem, $m$ nodes from one side of the market are called \emph{servers} and are fully known upfront.
On the other side of the market, which is often referred to as the \emph{request} side, $n$ nodes arrive sequentially.
We model the instantaneity of performing a match by assuming that each arriving request has to be immediately and irrevocably assigned to an available server.
Due to the importance of application to ride-hailing, we assume that all servers and requests are located in the same metric space, and the cost incurred by a match is measured by the distance between two matched nodes.
After realizing a match, both the matched server and request are removed.
The objective of a central market planer is to minimize the total matching cost.

Beyond ride-hailing, online metric matching also finds applications in service platforms, online gaming, and kidney exchanges.
In these settings, participants' features are encoded as high-dimensional vectors, where the distance measures incompatibility---for instance, skill levels in gaming matchmaking, medical compatibility factors in kidney exchanges, or service requirements in platform matching.

We consider both \emph{balanced} and \emph{unbalanced} markets, where balanced markets contain an equal number of servers and requests, and unbalanced markets possess more servers than requests.
Note that balanced markets constitute a special case of unbalanced markets.
We will adopt \emph{competitive ratio} (resp. \emph{regret}) as our performance measure, which is defined as the ratio (resp. difference) between the expected average cost of matching each request incurred by the algorithm and that incurred by the optimal offline solution.

A majority of literature on online metric matching focuses on the \emph{adversarial} model \cite{DBLP:conf/soda/MeyersonNP06,DBLP:conf/compgeom/Raghvendra18,DBLP:journals/talg/PesericoS23}, where the locations of all servers and requests are adversarially chosen, or the \emph{random-order} model \cite{DBLP:conf/approx/Raghvendra16}, which assumes that the arrival order of adversarially chosen requests is randomly permuted.
Both models, albeit being fairly general, suffer from super-constant competitive ratio lower bounds \cite{DBLP:conf/soda/MeyersonNP06,DBLP:conf/approx/Raghvendra16,DBLP:journals/talg/PesericoS23}.
Recent lines of work bypass these barriers by modeling more practical and tractable scenarios in which algorithms are equipped with additional information on future uncertainty.
Specifically, \cite{DBLP:conf/icalp/GuptaGPW19,DBLP:conf/sigecom/BalkanskiFP23} assume that requests are independently drawn from a known distribution, and \cite{DBLP:journals/talg/AntoniadisCEPS23} consider the case where algorithms are provided with predictions of the behaviors of the optimal offline algorithm.
They achieve improved or even nearly optimal results under these weaker models.

\subsection{Main Contribution}

As our main result, we propose a generic framework to designing algorithms for online metric matching that leverages information on future uncertainty provided in the form of predicted request locations.
Our framework operates by enhancing an arbitrary algorithm $\calA$ that does not use additional information.
As a key feature of our framework, the enhancement only invokes $\calA$ in a black-box manner without imposing technical assumptions on $\calA$.

More specifically, given a prediction of request locations, for an arbitrary algorithm $\calA$ that does not use predictions, we can convert $\calA$ into an enhanced algorithm $\calA'$ that bases its matching decisions on both the prediction and $\calA$'s strategy, and the performance of $\calA'$ depends jointly on the prediction accuracy and $\calA$'s performance (\Cref{thm:alg-with-addi-info}).

We then demonstrate the effectiveness of our framework by applying it to the semi-stochastic and prediction settings, following by our additional results for the adversarial setting.

\paragraph{Semi-Stochastic Setting.}
For the semi-stochastic setting where servers are adversarially chosen and requests are independently drawn from a known distribution $\mathbb{D}$, by sampling from $\bbD$ to generate a prediction of request locations, we apply our generic framework to reduce the semi-stochastic setting to the stochastic setting where all servers and requests are independently drawn from $\mathbb{D}$.
Specifically, given an algorithm $\calA$ for the stochastic setting and balanced markets, we can use $\calA$ in a black-box manner to derive an algorithm $\calA'$ for the semi-stochastic setting and unbalanced markets (\Cref{thm:framework-stocas-advers}).
Moreover, if the markets for the semi-stochastic setting are balanced, $\calA'$ inherits the competitive ratio of $\calA$ up to a constant factor.
Next, we describe our concrete results in the semi-stochastic setting obtained by combining our reduction with the recent progress in the stochastic setting~\cite{DBLP:conf/sigecom/Kanoria22,DBLP:conf/sigecom/ChenKKZ23}.

\begin{table}[t]
\centering
\begin{tabular}{@{}lllllll@{}}
\toprule
           & \multicolumn{2}{l}{Balanced Markets} &              &  & \multicolumn{2}{l}{Unbalanced Markets} \\ \cmidrule(lr){2-4} \cmidrule(l){6-7} 
           & Uniform Dis.   & Smooth Dis.         & General Dis. &  & Smooth Dis.           & General Dis.   \\ \midrule
$d=1$      & $O(1)$         & $O(1)$              & $O(1)$       &  & \textcolor{red}{$O(\log n)$}           & \textcolor{red}{$O(\log n)$}    \\
$d=2$      & \textcolor{red}{$O(1)$} ($\omega(1)$)         & $\omega(1)$  & $\omega(1)$     &  & \textcolor{red}{$O(\sqrt{\log n})$}    & \textcolor{red}{$O(\log n)$}       \\
$d \geq 3$ & \textcolor{red}{$O(1)$} ($\omega(1)$)         & \textcolor{red}{$O(1)$} ($\omega(1)$)              & $\omega(1)$     &  & \textcolor{red}{$O(1)$}                & \textcolor{red}{$O(\log n)$}       \\ \bottomrule
\end{tabular}
\caption{Competitive ratios for the Euclidean metric $[0, 1]^d$ in the semi-stochastic setting.
All previous state-of-the-art results are given in black fonts, where we use $\omega(1)$ to denote $O((\log \log \log n)^2)$, and all new results achieved in this paper are marked in red.
In particular, for balanced markets with general distributions, the competitive ratios of $O(1)$ for $d = 1$ and of $O((\log \log \log n)^2)$ for $d \geq 2$ are achieved by \cite{DBLP:conf/icalp/GuptaGPW19}, which imply the bounds on their left.}
\label{tab:results-cr}
\end{table}

We first introduce the competitive ratio results for the Euclidean metric $[0, 1]^d$, which are summarized in \Cref{tab:results-cr}.
Prior to our work, the best competitive ratio achieved for balanced markets with general metrics and distributions is $O((\log \log \log n)^2)$, and the same algorithm is $O(1)$-competitive for tree metrics~\cite{DBLP:conf/icalp/GuptaGPW19}.
We start by showing that combining our reduction with the algorithm of \cite{DBLP:conf/sigecom/Kanoria22} leads to an $O(1)$-competitive algorithm for all $d \geq 1$ in balanced markets and the semi-stochastic setting with the uniform distribution (\Cref{thm:bal-uni}).

Next, we describe our results in balanced markets and the semi-stochastic setting with smooth distributions, where a distribution is smooth if it admits a density function upper bounded by a constant.
For the algorithm $\calA$ required by the reduction, we adopt the \emph{Simulate-Optimize-Assign-Repeat (SOAR)} algorithm of \cite{DBLP:conf/sigecom/ChenKKZ23}, which results in an $O(1)$-competitive algorithm for $d \geq 3$ (\Cref{thm:smooth-adv-cr}).

Finally, we present our results for the semi-stochastic setting and unbalanced markets, for which we assume that servers outnumber requests by at most a constant factor.
For smooth distributions, by applying the same approach as for balanced markets, the resulting algorithm admits a competitive ratio of $O(\sqrt{n})$ for $d = 1$, $O(\sqrt{\log n})$ for $d = 2$, and $O(1)$ for $d \geq 3$ (\Cref{thm:ratio-ahg-excess}).
To improve the undesirable bound for $d = 1$, we show that the $O(\log n)$ competitive ratio of the algorithm proposed by \cite{DBLP:conf/icalp/GuptaL12}, which holds in balanced markets when servers and requests are all adversarially located in a doubling metric\footnote{A metric space is \emph{doubling} if there exists a constant $M > 0$ such that for all $r > 0$, every ball of radius $r$ can be covered by at most $M$ balls of radius $r / 2$. In particular, the Euclidean metric $[0, 1]^d$ is doubling for constant $d$.}, is preserved in unbalanced markets (\Cref{cor:adv-unbal-line}).
These constitute the first known results for the semi-stochastic setting and unbalanced markets.

\begin{table}[t]
\centering
\begin{tabular}{@{}llllll@{}}
\toprule
           & \multicolumn{2}{l}{Stoch. and Balanced}                      &  & \multicolumn{2}{l}{Semi-Stoch. and Unbalanced}                           \\ \midrule
           & Lower Bound                & Upper Bound                  &  & Lower Bound                       & Upper Bound                  \\ \midrule
$d=1$      & $\Omega(n^{-\frac{1}{2}})$ & $O(n^{-\frac{1}{2}})$        &  & $\Omega(n^{-\frac{1}{2}})$        & \textcolor{red}{$O(n^{-\frac{1}{2}})$}        \\
$d=2$      & $\Omega(n^{-\frac{1}{2}})$ & $O(n^{-\frac{1}{2}} \sqrt{\log n})$ &  & $\Omega(n^{-\frac{1}{2}})$ & \textcolor{red}{$O(n^{-\frac{1}{2}} \sqrt{\log n})$} \\
$d \geq 3$ & $\Omega(n^{-\frac{1}{d}})$ & $O(n^{-\frac{1}{d}})$        &  & $\Omega(n^{-\frac{1}{d}})$        & \textcolor{red}{$O(n^{-\frac{1}{d}})$}        \\ \bottomrule
\end{tabular}
\caption{Regret for the Euclidean metric $[0, 1]^d$ with general distributions.
All previous results are given in black fonts, and all new results achieved in this paper are marked in red.
In particular, the results for the stochastic setting and balanced markets are achieved by \cite{DBLP:conf/sigecom/ChenKKZ23}, which imply the lower bounds for the semi-stochastic setting and unbalanced markets.}
\label{tab:results-regret}
\end{table}

We further describe our results on regret analysis for the Euclidean metric $[0, 1]^d$, which are summarized in~\Cref{tab:results-regret}.
Recall that for balanced markets and the stochastic setting with general distributions, \cite{DBLP:conf/sigecom/ChenKKZ23} show that the SOAR algorithm admits regret of $O(n^{-\frac{1}{2}})$ for $d = 1$, $O(n^{-\frac{1}{2}} \sqrt{\log n})$ for $d = 2$, and $O(n^{-\frac{1}{d}})$ for $d \geq 3$, which is optimal for $d \neq 2$ and optimal up to a logarithmic factor for $d = 2$.
By combining our reduction with the SOAR algorithm, we show that the same regret is attainable even for unbalanced markets and the semi-stochastic setting with general distributions (\Cref{thm:any-dis-regret-ub}).
Since the case with balanced markets and the stochastic setting constitutes a special case of unbalanced markets and the semi-stochastic setting, the lower bounds by \cite{DBLP:conf/sigecom/ChenKKZ23} imply our regret guarantees to be nearly optimal.

\paragraph{Prediction Setting.}
In the prediction setting, the algorithm receives a set of $n$ points as a prediction in advance, which predicts the unknown request locations.
We consider balanced markets and the adversarial setting, where servers and requests are all adversarially chosen.
In particular, the prediction is not guaranteed to be accurate, and its error is oblivious to the algorithm.
The algorithm is expected to be (1) \emph{consistent}: its performance is nearly optimal with perfectly accurate predictions, and (2) \emph{robust}: with inaccurate predictions, its performance is comparable to the best online algorithm that does not use predictions.
In our case, the prediction error $\eta$ is defined as the minimum cost of all perfect matchings between the prediction and the realized request locations.

Given an $\alpha$-competitive algorithm $\calA$ that does not use predictions, we apply our generic framework to derive an algorithm $\calA'$ that leverages the prediction and achieves a competitive ratio of $O(\min\{\alpha, 1+ \frac{\eta (\alpha + 1) }{\opt}\})$, where $\opt$ denotes the optimal offline cost (\Cref{thm:comp-alg}).
To see the consistency of $\calA'$, its competitive ratio converges to a constant when the prediction error $\eta$ approaches $0$.
Moreover, under a large prediction error, we recover the competitive ratio $\alpha$ provided by $\calA$ up to a constant factor, fulfilling the robustness requirement.
Besides, the performance of $\calA'$ degrades smoothly as the prediction error increases, which is usually referred to as the \emph{smoothness} property.
Finally, we apply the above result to several specific metrics (\Cref{cor:pred-metric}).

To highlight the practical relevance of our prediction model, in reality, ride-hailing platforms may not predict exact future request locations but rather use historical data to estimate demand distributions across regions. Our prediction model with point locations serves as a useful abstraction that captures various practical scenarios: (1) predictions could be sampled from estimated distributions, providing robustness when these estimates are imperfect—unlike the semi-stochastic setting which assumes perfect distributional knowledge; (2) predictions could represent anticipated requests based on scheduled events or historical patterns; (3) the prediction error $\eta$ naturally quantifies uncertainty in any forecasting method.

Our framework's strength lies in its generality—it provides performance guarantees regardless of how predictions are generated, degrading gracefully from near-optimal performance with accurate predictions to worst-case guarantees with poor predictions. This robustness is particularly valuable when distributional assumptions may not hold perfectly in practice.

\paragraph{Adversarial Setting.}

We turn to our results for the adversarial setting without predictions.
As alluded to previously, we generalize the $O(\log n)$ competitive ratio of the algorithm by \cite{DBLP:conf/icalp/GuptaL12} for doubling metrics from balanced markets to unbalanced markets.
This is achieved based on the more general result that every greedy algorithm with a tie-breaking rule satisfying certain properties admits the same competitive ratio in balanced and unbalanced markets (\Cref{thm:greedy-bal-to-unbal}).
As a by-product, we also show that the $O(\log^3 n)$-competitive algorithm by \cite{DBLP:conf/soda/MeyersonNP06} for balanced markets with general metrics retains its competitive ratio in unbalanced markets (\Cref{cor:adv-arbi-metric}).
We remark here that all prior results in the adversarial setting are established in balanced markets, and it is not immediate whether they can be carried over to unbalanced markets.

\subsection{Related Work}
\label{sec:related-work}

\paragraph{Online Metric Matching in the Adversarial Setting.}
A majority of literature for online metric matching focuses on the adversarial setting and balanced markets.
For general metrics, \cite{DBLP:journals/jal/KalyanasundaramP93}, \cite{DBLP:journals/tcs/KhullerMV94}, and \cite{DBLP:conf/approx/Raghvendra16} respectively give $(2n - 1)$-competitive deterministic algorithms, which is optimal.
When randomization is allowed, an $O(\log^3 n)$-competitive algorithm is proposed by \cite{DBLP:conf/soda/MeyersonNP06}, which is subsequently improved to $O(\log^2 n)$~\cite{DBLP:journals/algorithmica/BansalBGN14}.
In the random-order model, \cite{DBLP:conf/approx/Raghvendra16} proposes an $O(\log n)$-competitive deterministic algorithm, which is optimal even for randomized algorithms, and \cite{DBLP:journals/orl/GairingK19} show that the competitive ratio of the greedy algorithm lies between $\Omega(n^{0.22})$ and $O(n)$.

For the line metric, \cite{DBLP:journals/jal/KalyanasundaramP93} show that the competitive ratio of the greedy algorithm is at least $2^n - 1$, and a deterministic algorithm with a sublinear competitive ratio is known \cite{DBLP:journals/algorithmica/AntoniadisBNPS19}.
Furthermore, \cite{DBLP:conf/focs/NayyarR17} present an $O(\log^2 n)$-competitive deterministic algorithm, which is later shown to be $O(\log n)$-competitive \cite{DBLP:conf/compgeom/Raghvendra18}.
\cite{DBLP:conf/icalp/GuptaL12} propose an $O(\log n)$-competitive randomized algorithm for the more general doubling metrics.
Regarding lower bounds, \cite{DBLP:journals/tcs/FuchsHK05} show that no deterministic algorithm can achieve a competitive ratio strictly better than $9.001$; it is then improved to $\Omega(\sqrt{\log n})$, which holds even for randomized algorithms \cite{DBLP:journals/talg/PesericoS23}.

\paragraph{Online Metric Matching in the Stochastic Setting.}
We start by reviewing the literature for balanced markets.
For the Euclidean metric
$[0, 1]^d$, \cite{DBLP:conf/sigecom/Kanoria22} presents an $O(1)$-competitive algorithm for the uniform distribution, and \cite{DBLP:conf/sigecom/ChenKKZ23} give an algorithm achieving nearly optimal regret for general distributions.
\cite{DBLP:conf/sigecom/BalkanskiFP23} show that the greedy algorithm is $O(1)$-competitive for $[0, 1]$ with the uniform distribution.

Motivated by the potential value of excess servers, \cite{DBLP:conf/sigecom/AkbarpourALS22} initiate the study of unbalanced markets, for which they show that the greedy algorithm is $O(\log^3 n)$-competitive for $[0, 1]$ with the uniform distribution.
This is subsequently improved to $O(1)$~\cite{DBLP:conf/sigecom/BalkanskiFP23}.
\cite{DBLP:conf/sigecom/Kanoria22} gives a $O(1)$-competitive algorithm for unbalanced markets and the Euclidean metric $[0, 1]^d$ with the uniform distribution.

\paragraph{Online Metric Matching in the Semi-Stochastic Setting.}
In contrast to the fruitful development in the stochastic setting, much less is known for the semi-stochastic setting, for which all existing results are for balanced markets.
For general distributions, \cite{DBLP:conf/icalp/GuptaGPW19} give an $O((\log \log \log n)^2)$-competitive algorithm for general metrics, which is $O(1)$-competitive for tree metrics.
Moreover, for $[0, 1]$ with the uniform distribution, \cite{DBLP:conf/sigecom/BalkanskiFP23} show that the greedy algorithm is $\Theta(\log n)$-competitive.

\paragraph{Online Algorithm with Predictions.}
In recent years, due to the unparalleled impact of machine learning, the paradigm of designing online algorithms with access to externally provided predictions is of particular interest to overcome both information-theoretic and computational barriers.
This paradigm suitably models the practical scenarios where future uncertainty is predictable according to prior knowledge, where predictions can be generated by machine learning models, human experts, historical data, heuristics, or even crowdsourcing.
In particular, it is commonly assumed that the provided predictions are untrusted with their error oblivious to the algorithm, and the algorithm is desired to be consistent and robust.
This paradigm has been incorporated into the study of various classical problems including online metric matching \cite{DBLP:journals/talg/AntoniadisCEPS23}, metrical task systems \cite{DBLP:conf/icml/0001C00S23,DBLP:journals/talg/AntoniadisCEPS23}, online facility location \cite{DBLP:conf/nips/AlmanzaCLPR21,DBLP:conf/iclr/JiangLLTZ22}, online bipartite matching \cite{DBLP:conf/nips/JinM22,DBLP:journals/disopt/AntoniadisGKK23}, and caching \cite{DBLP:journals/jacm/LykourisV21,DBLP:conf/nips/0001PS022,DBLP:conf/icml/Anand0KP22}, among others.

The paper most relevant to our work is \cite{DBLP:journals/talg/AntoniadisCEPS23}, which also studies online metric matching with predictions.
As the primary difference, we assume that the prediction is a set of request locations and is given in advance, whereas they assume that each arriving request is accompanied by a prediction of the behaviors of the optimal offline algorithm until the end of the current period, i.e., all servers matched by the optimal offline algorithm to the requests arrived so far.\footnote{They allow the predictions associated with different requests to be inconsistent, i.e., each prediction may not be a superset of previous predictions.}
Moreover, their prediction error is defined against an optimal offline algorithm as the summation of prediction errors of all periods, where the prediction error of each period is defined as the minimum cost of all perfect matchings between the predicted matched servers and the actual matched servers.
Hence, their results are incomparable with ours.
\section{Model Setup}

In our model, $n$ requests $R = (r_1, \ldots, r_n)$ and $m$ servers $S = \{s_1, \ldots, s_m\}$ are located in a metric space $(X, \delta)$ with $n \leq m$.
The servers are all available at time $t = 0$ with their locations known to the algorithm.
At each time step $t \in [n]$, request $r_t$ arrives with its location revealed to the algorithm and needs to be matched immediately and irrevocably to a server $s$ that has not been matched, incurring a cost $\delta(s, r_t)$.

Depending on the context, we will assume the locations of servers and requests are either adversarially chosen or independently drawn from a known distribution $\mathbb{D}$.
In the \emph{adversarial} setting, both $S$ and $R$ are adversarially chosen.
In the \emph{semi-stochastic} setting, $S$ is adversarial, while the requests in $R$ are independently drawn from $\bbD$.
In the \emph{stochastic} setting, all servers in $S$ and requests in $R$ are independently drawn from $\bbD$.

For all algorithm $\calA$, set of servers $S$ with $|S| \geq n$, and tuple of requests $R$ with $|R| = n$, denote $\cost(\calA; S, R)$ as the expected cost of $\calA$ under $S$ and $R$; denote $\costD(\calA; S, n)$ as the expected cost of $\calA$ under $S$ and $n$ requests independently drawn from $\mathbb{D}$; denote $\costD(\calA; n)$ as the expected cost of $\calA$ under $n$ servers and $n$ requests, all independently drawn from $\mathbb{D}$.
Given a matching $M$ between two sets $S$ and $R$, for each matched element $s \in S$ (resp. $r \in R$), we use $M(s)$ (resp. $M(r)$) to denote the element in $R$ (resp. $S$) that $s$ (resp. $r$) is matched to.

We define the smoothness property for distributions as follows.

\begin{definition}[Smoothness]
    We say that a distribution $\mathbb{D}$ over $X$, which supports a uniform measure $\mathbb{U}$, is \emph{$\beta$-smooth} for $\beta \geq 1$ if the corresponding measure $\mu_{\bbD}$ satisfies $\mu_{\mathbb{D}}(X') \leq \beta \cdot \mathbb{U}(X')$ for every measurable subset $X' \subseteq X$.
\end{definition}

\subsection{Benchmark}

We adopt the optimal offline cost, defined as the optimal matching cost without any future uncertainty, as our benchmark.
For all set of servers $S$ and tuple of requests $R$ such that $|S| \geq |R|$, denote $\opt(S, R)$ as the minimum cost of all matchings between $S$ and $R$ that match all requests.
Given a set of servers $S$ with $|S| \geq n$ and a distribution $\mathbb{D}$ over $X$, define $\optD(S, n) := \Expect_{R \sim \mathbb{D}^n} \left[ \opt(S, R) \right]$ as the optimal offline cost on $S$ with $n$ requests in $R$ independently drawn from $\mathbb{D}$.
Furthermore, define $\optD(n) := \Expect_{S \sim \bbD^n, R \sim \mathbb{D}^n} \left[ \opt(S, R) \right]$.

We use \emph{competitive ratio} and \emph{regret} to measure the performance of online algorithms.
In the adversarial setting, we say that an algorithm $\calA$ is $\alpha$-competitive if for all $S$ and $R$, $\cost(\calA; S, R) \leq \alpha \cdot \opt(S, R)$.
In the semi-stochastic setting, we say that an algorithm $\calA$ is $\alpha$-competitive if for all $\bbD$, $S$, and $n$, $\costD(\calA; S, n) \leq \alpha \cdot \optD(S, n)$; also, define the regret of $\calA$ as
\begin{align*}
    \regret(\calA; S, n)
    := \frac{1}{n} \left( \costD(\calA; S, n) - \optD(S, n) \right)  \,.
\end{align*}
In the stochastic setting, we say that an algorithm $\calA$ is $\alpha$-competitive if for all $\bbD$ and $n$, $\costD(\calA; n) \leq \alpha \cdot \optD(n)$.

\subsection{Benchmark for Euclidean Metric}
\label{sec:optoff-euc}

A considerable portion of this paper focuses on the Euclidean metric.
In this subsection, we assume the metric space to be $[0, 1]^d$ with the Euclidean distance, and we bound the optimal offline costs for smooth distributions and general distributions.

For an arbitrary distribution $\mathbb{D}$, the following upper bounds for $\optD(n)$ are known from the optimal transport literature.

\begin{lemma}[\cite{talagrand1992ajtai,dobric1995asymptotics,bobkov2019one}]\label{lem:optoff-any-ub}
    For every distribution $\mathbb{D}$ over $[0, 1]^d$,
    \begin{align*}
        \optD(n)
        \leq \begin{cases}
            O(\sqrt{n}) & d = 1\\
            O(\sqrt{n \log n}) & d = 2\\
            O(n^{1 - \frac{1}{d}}) & d \geq 3
        \end{cases} \,.
    \end{align*}
\end{lemma}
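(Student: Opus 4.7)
The plan is to reduce this bound to the well-studied rate of empirical convergence in the $1$-Wasserstein distance. Let $\mu = \mathbb{D}$ and let $\mu_n^S$ and $\mu_n^R$ be the empirical measures of the $n$ server locations and the $n$ request locations. Since $\opt(S,R) = n \cdot W_1(\mu_n^S, \mu_n^R)$ by Kantorovich duality, the triangle inequality for $W_1$ gives
\[
\opt(n) \;=\; n \cdot \Expect\left[W_1(\mu_n^S, \mu_n^R)\right] \;\leq\; 2n \cdot \Expect\left[W_1(\mu_n, \mu)\right],
\]
where $\mu_n$ denotes a generic empirical measure of $n$ i.i.d.\ draws from $\mu$. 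It therefore suffices to show that $\Expect\left[W_1(\mu_n, \mu)\right]$ is $O(n^{-1/2})$ for $d = 1$, $O(n^{-1/2} \log n)$ for $d = 2$, and $O(n^{-1/d})$ for $d \geq 3$, uniformly over all probability measures $\mu$ supported on $[0,1]^d$.

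For this, I would invoke the classical dyadic partitioning argument. Partition $[0,1]^d$ into $2^{kd}$ axis-aligned subcubes of side $2^{-k}$ at each scale $k = 0, 1, \ldots, K$ with $K = \lceil \log_2 n / d \rceil$, and build a transport plan that, scale by scale, matches the empirical mass inside each subcube to the target $\mu$-mass. The total mass rerouted at scale $k$ is controlled by $V_k := \sum_Q |\mu_n(Q) - \mu(Q)|$, and each unit travels a distance at most $O(2^{-k})$. By Cauchy--Schwarz followed by Jensen's inequality,
\[
\Expect\left[V_k\right] \;\leq\; \sqrt{2^{kd}} \cdot \sqrt{\Expect\left[\sum_Q (\mu_n(Q) - \mu(Q))^2\right]} \;\leq\; \sqrt{2^{kd}/n},
\]
using $\Expect[(\mu_n(Q) - \mu(Q))^2] \leq \mu(Q)/n$ together with $\sum_Q \mu(Q) = 1$. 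Hence the contribution of scale $k$ is $O\!\left(2^{-k} \sqrt{2^{kd}/n}\right) = O(n^{-1/2} \cdot 2^{k(d-2)/2})$, and summing the resulting geometric series over $k = 0, 1, \ldots, K$ produces the three stated rates, with any residual mass at the finest scale moving distance $O(2^{-K}) = O(n^{-1/d})$ that is absorbed into the final bound.

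The main obstacle is making the estimate uniform over all distributions $\mu$, since $\mu$ can concentrate almost all of its mass in just a few subcubes. The Cauchy--Schwarz step above sidesteps this difficulty because it invokes only $\sum_Q \mu(Q) = 1$, which holds for every probability measure, rather than any smoothness assumption on $\mu$. A clean alternative is to simply black-box the Wasserstein empirical convergence rates of Bobkov--Ledoux for $d = 1$, Ajtai--Koml\'os--Tusn\'ady for $d = 2$, and Dudley for $d \geq 3$, which is the route taken in the reference cited in the statement.
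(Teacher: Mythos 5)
Your argument is correct, and it supplies something the paper does not: the paper simply cites Theorem~3.11 of Hundrieser et al.\ and offers no proof, whereas you reconstruct the bound from scratch via the standard multi-scale dyadic transport argument. The reduction $\opt(S,R) = n\,W_1(\mu_n^S,\mu_n^R)$ followed by the triangle inequality is the right first move, and your Cauchy--Schwarz-then-Jensen chain gives $\Expect[V_k] \le \sqrt{2^{kd}/n}$ exactly as needed, with $\sum_Q \mu(Q) = 1$ supplying the distribution-free bound. Summing $2^{-k}\sqrt{2^{kd}/n}$ over $k$ up to $K \asymp \log_2(n)/d$ then yields the three regimes cleanly, and the residual term $O(2^{-K}) = O(n^{-1/d})$ is indeed absorbed in every case. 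The one place your sketch is looser than a full proof is the construction of the scale-by-scale transport plan: after correcting discrepancies at level $k-1$, the measure has been altered, so justifying that the level-$k$ reroute is still controlled by $V_k$ (and that distances are $O(2^{-k})$) requires the standard bookkeeping found in, e.g., Boissard--Le~Gouic or Fournier--Guillin. This is routine but worth stating if you want a self-contained proof rather than a sketch. As you note, black-boxing Bobkov--Ledoux ($d=1$), AKT ($d=2$), and Dudley ($d\ge 3$) is an equally legitimate route and is effectively what the paper does by citing Hundrieser et al., who package these rates uniformly over compactly supported distributions.
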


Next, we establish lower bounds for $\optD(n)$ with smooth distribution $\bbD$.
We defer the proof of \Cref{lem:optoff-smooth-lb} to \Cref{sec:proof-lem-optoff-smooth-lb}.

\begin{lemma}\label{lem:optoff-smooth-lb}
    Assume $\mathbb{D}$ to be a $\beta$-smooth distribution over $[0, 1]^d$ for $\beta \geq 1$.
    Then,
    \begin{align*}
        \optD(n) \geq
        \begin{cases}
            \Omega(\beta^{-1} \sqrt{n}) & d = 1\\
            \Omega(\beta^{-\frac{1}{d}} n^{1 - \frac{1}{d}}) & d \geq 2 
        \end{cases}\,.
    \end{align*}
\end{lemma}

The lower bound in \Cref{lem:optoff-smooth-lb} for $d = 1$ follows from the quantile mapping and the well-known $\Omega(\sqrt{n})$ lower bound for the optimal offline cost with the uniform distribution, and the lower bound for $d \geq 2$ relies on nearest-neighbor-distance.
For $d = 2$, we fail to generalize the lower bound of $\Omega(\sqrt{n \log n})$ in the Ajtai-Koml\'os-Tusn\'ady Matching Theorem, whose proof crucially exploits the structures of the uniform distribution.
We remark that the optimal matching for $d = 2$ exhibits the most intricate structure, and we refer to \cite{DBLP:conf/sigecom/Kanoria22,talagrand2022upper} for more thorough discussion.
Hence, we leave the task of closing the gap for $d = 2$ to future work.

Finally, we present lower bounds for unbalanced markets with smooth distributions, whose proof also employs nearest-neighbor-distance.
We defer the proof of \Cref{lem:lb-optoff-excess} to \Cref{sec:proof-lem-lb-optoff-excess}.

\begin{lemma}\label{lem:lb-optoff-excess}
    Assume $\mathbb{D}$ to be a $\beta$-smooth distribution over $[0, 1]^d$ for $\beta \geq 1$.
    Then, for all $d \geq 1$ and $S \subseteq [0, 1]^d$ with $|S| = \kappa n$ and $\kappa \geq 1$, $\optD(S, n) \geq \Omega((\beta\kappa)^{-\frac{1}{d}} n^{1 - \frac{1}{d}})$.
\end{lemma}

\subsection{The Simulate-Optimize-Assign-Repeat Algorithm}

The \emph{Simulate-Optimize-Assign-Repeat (SOAR)} algorithm is proposed by \cite{DBLP:conf/sigecom/ChenKKZ23} in the stochastic setting and balanced markets.
It will serve as an important building block of our algorithms in the semi-stochastic setting, and we state its guarantee as follows.

\begin{lemma}[\cite{DBLP:conf/sigecom/ChenKKZ23}]\label{lem:matching-cost-soar}
    Suppose that $m = n$ and $\mathbb{D}$ is an arbitrary distribution over $X$, where the cost function $\delta$ may \emph{not} satisfy the triangle inequality.
    Then, when servers and requests are independently drawn from $\mathbb{D}$, there exists an algorithm with expected cost $\sum_{t=1}^n \frac{1}{t} \cdot \optD(t)$.
\end{lemma}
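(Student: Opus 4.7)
The plan is to show that the $\mathsf{SOAR}$ algorithm itself achieves this bound. Recall that $\mathsf{SOAR}$ maintains the set $S_t$ of $n - t + 1$ still-available servers at round $t$; upon the arrival of the actual request $r_t \sim \mathbb{D}$, it draws $n - t$ fresh i.i.d.\ samples $\tilde{r}_{t+1}^{(t)}, \ldots, \tilde{r}_n^{(t)} \sim \mathbb{D}$ (independently of everything so far), computes an optimal offline matching $M_t$ between $S_t$ and $R_t' := \{r_t, \tilde{r}_{t+1}^{(t)}, \ldots, \tilde{r}_n^{(t)}\}$, and assigns $r_t$ to $M_t(r_t)$.

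I would organize the argument around a per-round symmetry and an inductive distributional invariant. For the symmetry: conditional on the unordered sets $S_t$ and $R_t'$, the matching $M_t$ is a deterministic function of these two sets, and since the $n - t + 1$ elements of $R_t'$ are exchangeable, the label $r_t$ is uniformly distributed over $R_t'$. This yields
\[
\Expect\bigl[\delta(r_t, M_t(r_t)) \,\big|\, S_t, R_t'\bigr] = \frac{\opt(S_t, R_t')}{n - t + 1}\,.
\]

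The inductive invariant, which will be the main subtlety, is the following: at the start of round $t$, the set $S_t$ is distributed as $n - t + 1$ i.i.d.\ samples from $\mathbb{D}$, and $S_t$ is independent of the random inputs used at rounds $\geq t$ (in particular of $R_t'$). The base case $t = 1$ is immediate since $S_1 = S \sim \mathbb{D}^n$ and all round-$1$ randomness is fresh. For the inductive step, the same exchangeability argument shows that $M_t(r_t)$ is a uniformly random element of $S_t$ conditional on $(S_t, R_t')$; hence $S_{t+1} = S_t \setminus \{M_t(r_t)\}$ is, conditional on $S_t$, a uniformly random $(n - t)$-subset of $S_t$. Since a uniformly random $(n - t)$-subset of $n - t + 1$ i.i.d.\ samples is itself distributed as $n - t$ i.i.d.\ samples, $S_{t+1}$ inherits the marginal claim. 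Independence of $S_{t+1}$ from the round-$(t+1)$ inputs is automatic because those inputs are generated freshly. The care required here is to track that past rounds' discarded simulations do not break exchangeability; this works because the optimal-matching oracle $M_t$ only sees unordered sets, and each round's simulation is independent.

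Combining the two pieces gives $\Expect[\opt(S_t, R_t')] = \opt(n - t + 1)$, so the expected round-$t$ cost is $\opt(n - t + 1)/(n - t + 1)$, and summing over $t = 1, \ldots, n$ yields $\sum_{k=1}^n \opt(k)/k$ as desired. The argument never invokes the triangle inequality — only the i.i.d.\ structure and the symmetry it induces — consistent with the lemma's hypothesis that $\delta$ need not be a metric.
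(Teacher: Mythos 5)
The paper does not prove this lemma; it is imported as a black-box result with the citation to \cite{DBLP:conf/sigecom/ChenKKZ23}, so there is no in-paper proof to compare against. Your reconstruction of the $\mathsf{SOAR}$ analysis is correct and matches the argument in that reference: the per-round exchangeability of $r_t$ within $R_t'$ gives expected round cost $\opt(S_t, R_t')/(n-t+1)$, and the inductive invariant that $S_t$ stays distributed as $n-t+1$ i.i.d.\ draws from $\mathbb{D}$ (because the removed server $M_t(r_t)$ is a uniformly random element of $S_t$ given the unordered sets) closes the recursion, yielding $\sum_{k=1}^n \opt(k)/k$ after reindexing.
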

\section{Generic Framework for Leveraging Predictions of Request Locations}
\label{sec:reduction}

In this section, we present a generic framework for online metric matching to design algorithms that leverage the provided prediction in the form of a set of request locations, which operates by enhancing an algorithm that does not use predictions.
Then, we apply this framework to the semi-stochastic setting.

\begin{theorem}\label{thm:alg-with-addi-info}
    Given an algorithm $\calA$ for online metric matching, there exists an algorithm $\calA'$, which is provided with a set of points $I$ before the arrival of any request, such that for all set of servers $S$ and tuple of requests $R$ with $|S| \geq |R| = |I|$, $\cost(\calA'; S, R) \leq \opt(S, I) + \cost(\calA; I, R)$.
    Moreover, if $\calA$ is deterministic, then $\calA'$ is also deterministic.
\end{theorem}

\begin{proof}
    We construct the algorithm $\calA'$ as follows.
    Fix $S, I, R$ as required.
    Let $M^{S, I}$ be the matching between $S$ and $I$ with the minimum cost that matches all points in $I$.
    Then, we run $\calA$ with the set of initial servers being $I$.
    For each new request $r$, if $\calA$ matches $r$ to $i \in I$, then $\calA'$ matches $r$ to $M^{S, I}(i) \in S$.

    Since $\calA'$ does not use any randomness except for the invocation of $\calA$, $\calA'$ is deterministic provided that $\calA$ is deterministic.
    By the triangle inequality, the cost incurred by $\calA'$ is upper bounded by
    \begin{align*}
        \cost(\calA'; S, R)
        \leq \sum_{i \in I} \delta(M^{S, I}(i), i) + \cost(\calA; I, R)
        =  \opt(S, I) + \cost(\calA; I, R)\,,
    \end{align*}
    concluding the proof.
\end{proof}

\subsection{Semi-Stochastic Setting}
\label{sec:stoch-setting-pre}

In this subsection, we apply \Cref{thm:alg-with-addi-info} to the semi-stochastic setting, which enables a reduction to the stochastic setting.
Specifically, we establish in the following theorem that an algorithm for the stochastic setting can be transformed into an algorithm for the semi-stochastic setting in a black-box manner.
In particular, for balanced markets, this transformation preserves the competitive ratio of the algorithm up to a constant factor.

\begin{theorem}\label{thm:framework-stocas-advers}
    For every distribution $\bbD$, given an algorithm $\calA$ for online metric matching, there exists an algorithm $\calA'$ such that $\costD(\calA'; S, n) \leq \optD(S, n) + \costD(\calA; n)$ for every set of servers $S$ with $|S| \geq n$.
    Moreover, for balanced markets, if $\calA$ is $\alpha$-competitive in the stochastic setting, then $\calA'$ is $(2 \alpha + 1)$-competitive in the semi-stochastic setting.
\end{theorem}

The rest of this subsection is devoted to proving \Cref{thm:framework-stocas-advers}.
Informally, we will construct the algorithm $\calA'$ using \Cref{thm:alg-with-addi-info}, where the points in the set $I$ are independently sampled from $\mathbb{D}$.
Then, we prove the guarantees of $\calA'$ by applying the triangle inequality.

We start with proving the triangle inequality for $\opt(\cdot, \cdot)$.

\begin{lemma}\label{lem:tri-opt}
    For all sets of points $S, I, R$ with $|S| = |I| = |R|$, $\opt(S, R) \leq \opt(S, I) + \opt(I, R)$.
\end{lemma}

\begin{proof}
    Let $M_1$ be the min-cost perfect matching between $S$ and $I$, and let $M_2$ be the min-cost perfect matching between $I$ and $R$.
    Note that
    \begin{align*}
        \opt(S, I) + \opt(I, R)
        = \sum_{i \in I} \delta(M_1(i), i) + \sum_{i \in I} \delta(i, M_2(i))
        \geq \sum_{i \in I} \delta(M_1(i), M_2(i))
        \geq \opt(S, R)  \,,
    \end{align*}
    where the first inequality holds by the triangle inequality.
\end{proof}

Next, for balanced markets, we lower bound the optimal offline cost in the semi-stochastic setting in terms of that in the stochastic setting.

\begin{lemma}\label{lem:lb-optoff-by-optoff}
    For all distribution $\mathbb{D}$ and set of servers $S$ with $|S| = n$, $\optD(n) \leq 2 \cdot \optD(S, n)$.
\end{lemma}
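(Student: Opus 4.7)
The plan is to couple an arbitrary adversarial server set $S'$ with a stochastic ``auxiliary'' server set $S$ drawn from $\mathbb{D}^n$, and then use the triangle inequality on compositions of matchings to relate $\opt(S, R)$ to $\opt(S', R)$ via a transport cost between $S$ and $S'$.

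Concretely, fix the adversarial set $S'$ with $|S'| = n$, and let $S, R \sim \mathbb{D}^n$ be drawn independently. For any realization of $S$ and $R$, let $M_1$ be an optimal matching between $S$ and $S'$ with cost $\opt(S, S')$, and let $M_2$ be an optimal matching between $S'$ and $R$ with cost $\opt(S', R)$. Composing $M_1$ and $M_2$ yields a perfect matching between $S$ and $R$, and by the triangle inequality its cost is at most $\opt(S, S') + \opt(S', R)$. Hence
\begin{align*}
    \opt(S, R) \;\leq\; \opt(S, S') + \opt(S', R) \,.
\end{align*}

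Next, take expectation over $S, R \sim \mathbb{D}^n$. The left-hand side becomes $\opt(n)$ by definition. For the right-hand side, $\Expect_{R}[\opt(S', R)] = \opt(S', n)$ directly. Moreover, because $S \sim \mathbb{D}^n$ has exactly the same distribution as a set of requests drawn from $\mathbb{D}^n$, and matching cost is symmetric in its two arguments, $\Expect_{S}[\opt(S, S')] = \opt(S', n)$ as well. Combining these gives $\opt(n) \leq 2 \opt(S', n)$, which rearranges to the claim.

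The argument is short and the only real ingredient is the triangle inequality on compositions of matchings; the main thing to be careful about is the symmetry step, i.e., recognizing that the distribution of $\opt(S, S')$ (with $S'$ fixed and $S \sim \mathbb{D}^n$) equals the distribution of $\opt(S', R)$ (with $S'$ fixed and $R \sim \mathbb{D}^n$), so that both terms on the right-hand side contribute $\opt(S', n)$ after taking expectations.
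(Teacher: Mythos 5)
Your proof is correct and matches the paper's argument exactly: both compose optimal matchings $S \to S'$ and $S' \to R$, apply the triangle inequality to get $\opt(S,R) \leq \opt(S,S') + \opt(S',R)$, and take expectations over $S, R \sim \mathbb{D}^n$. The paper simply writes $\Expect[\opt(S,S') + \opt(S',R)] = 2\opt(S',n)$ without spelling out the symmetry observation you made explicit, but the reasoning is the same.
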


\begin{proof}
    Given a set of servers $S$ with $|S| = n$, by \Cref{lem:tri-opt}, $\opt(I, R) \leq \opt(I, S) + \opt(S, R)$ for all sets of points $I, R$ with $|I| = |R| = n$.
    Taking expectations on both sides over $I \sim \mathbb{D}^n$ and $R \sim \mathbb{D}^n$, we obtain
    \begin{align*}
        \optD(n)
        \leq \Expect_{I \sim \bbD^n, R \sim \mathbb{D}^n} \left[ \opt(I, S) + \opt(S, R) \right]
        = 2 \cdot \optD(S, n) \,,
    \end{align*}
    which concludes the proof.
\end{proof}

Now, we are ready to prove \Cref{thm:framework-stocas-advers}.

\begin{proof}[Proof of \Cref{thm:framework-stocas-advers}.]
    We construct the algorithm $\calA'$ as follows.
    Given as input a set of servers $S$ and a tuple of requests $R$ with $|S| \geq |R| = n$, we sample a set of points $I \sim \mathbb{D}^n$ before the arrival of any request, and let $\calA'$ be the algorithm as per \Cref{thm:alg-with-addi-info}, which satisfies
    \begin{align*}
        \cost(\calA'; S, R)
        \leq \opt(S, I) + \cost(\calA; I, R)\,.
    \end{align*}
    Since $I$ and $R$ are independent, the expected cost of $\calA'$ is upper bounded by
    \begin{align*}
        \costD(\calA'; S, n)
        \leq \Expect_{I \sim \bbD^n, R \sim \mathbb{D}^n} \left[ \opt(S, I) + \cost(\calA; I, R) \right]
        =  \optD(S, n) + \costD(\calA; n)\,,
    \end{align*}
    which concludes the first part in \Cref{thm:framework-stocas-advers}.
    Furthermore, for balanced markets, if $\calA$ is $\alpha$-competitive in the stochastic setting, then
    \begin{align*}
        \costD(\calA'; S, n)
        \leq \optD(S, n) + \alpha \cdot \optD(n)
        \leq (2\alpha + 1) \cdot \optD(S, n) \,,
    \end{align*}
    where the second inequality holds by \Cref{lem:lb-optoff-by-optoff}, concluding the second part in \Cref{thm:framework-stocas-advers}.
\end{proof}
\section{Semi-Stochastic Setting: Euclidean Metric with Smooth Distributions}
\label{sec:euc-smooth}

In this section, we consider the semi-stochastic setting.
Moreover, we assume the metric space to be $[0, 1]^d$ with the Euclidean distance and $\mathbb{D}$ to be $\beta$-smooth\footnote{The smoothness of distributions is a mild technical assumption allowing us to lower bound $\optD(n)$. To see the necessity, consider $\mathbb{D}$ a non-smooth distribution supported on a singleton, in which case $\optD(n) = 0$.} for some constant $\beta \geq 1$, and we apply \Cref{thm:framework-stocas-advers} to derive algorithms for both balanced and unbalanced markets.

In \Cref{sec:bal-euc-uni}, for constant $d \geq 1$, we give an $O(1)$-competitive algorithm for balanced markets wth the uniform distribution.

Next, we consider balanced markets with smooth distributions in \Cref{sec:bal-euc-smooth}.
Recall that for general distributions, \cite{DBLP:conf/icalp/GuptaGPW19} give an $O(1)$-competitive algorithm for balanced markets and $d = 1$, which is $O((\log \log \log n)^2)$-competitive for general metrics.
However, for $d = 2$ and smooth distributions, we are only able to estimate the optimal offline cost up to a $\Theta(\sqrt{\log n})$ factor, which prevents our analysis from achieving an $o(\sqrt{\log n})$ competitive ratio. 
Hence, we only focus on the case with $d \geq 3$, where we present an $O(1)$-competitive algorithm.

Finally, in \Cref{sec:unbal-smooth}, we extend our results for balanced markets to unbalanced markets that possess at most a constant factor of excess servers, i.e., $n \leq m \leq O(n)$.
We give an algorithm with a competitive ratio of $O(\sqrt{n})$ for $d = 1$, $O(\sqrt{\log n})$ for $d = 2$, and $O(1)$ for $d \geq 3$.

\subsection{Prelude: Balanced Markets with Uniform Distribution}
\label{sec:bal-euc-uni}

In this subsection, we consider balanced markets and assume $\mathbb{D}$ to be the uniform distribution.
In the following theorem, we combine \Cref{thm:framework-stocas-advers} with the algorithm of \cite{DBLP:conf/sigecom/Kanoria22} to give an $O(1)$-competitive algorithm for constant $d \geq 1$.

\begin{theorem}\label{thm:bal-uni}
    Suppose that $m = n$ and $\bbD$ is the uniform distribution over $[0, 1]^d$.
    Then, for $d \geq 1$, there exists an $O(\sqrt{d})$-competitive algorithm in the semi-stochastic setting.
\end{theorem}

\begin{proof}
    We first recall the guarantees of the algorithm proposed by \cite{DBLP:conf/sigecom/Kanoria22}.

    \begin{theorem}[\cite{DBLP:conf/sigecom/Kanoria22}]\label{thm:alg-hg}
        Suppose that $m = n$ and $\bbD$ is the uniform distribution over $[0, 1]^d$.
        Then, for $d \geq 1$, there exists an $O(\sqrt{d})$-competitive algorithm in the stochastic setting.
    \end{theorem}
    
    Combining Theorems~\ref{thm:framework-stocas-advers} and~\ref{thm:alg-hg} concludes the proof.
\end{proof}

\subsection{Balanced Markets}
\label{sec:bal-euc-smooth}

In this subsection, we assume $d \geq 3$ and consider balanced markets with smooth distributions.
We prove that the SOAR algorithm is $O(1)$-competitive for $d \geq 3$ in the stochastic setting.
Combining with \Cref{thm:framework-stocas-advers} leads to an $O(1)$-competitive algorithm in the semi-stochastic setting with $d \geq 3$.

We first present our main theorem in this subsection.

\begin{theorem}\label{thm:smooth-adv-cr}
    Suppose that $m = n$ and $\mathbb{D}$ is a $\beta$-smooth distribution over $[0, 1]^d$ for $\beta \geq 1$.
    Then, for $d \geq 3$, there exists an $O(\beta^{\frac{1}{d}})$-competitive algorithm in the semi-stochastic setting.
\end{theorem}

The rest of this subsection is devoted to proving \Cref{thm:smooth-adv-cr}.
By \Cref{thm:framework-stocas-advers}, it suffices to show that the SOAR algorithm is $O(\beta^{\frac{1}{d}})$-competitive for $d \geq 3$ in the stochastic setting.
We compute the expected cost of the SOAR algorithm in the following lemma.

\begin{lemma}\label{lem:ratio-soar}
    Suppose that $m = n$ and $\mathbb{D}$ is an arbitrary distribution over $[0, 1]^d$.
    Then, there exists an algorithm $\calA$ such that
    \begin{align*}
        \costD(\calA; n) \leq
        \begin{cases}
            O(\sqrt{n}) & d = 1\\
            O(\sqrt{n \log n}) & d = 2\\
            O(n^{1 - \frac{1}{d}}) & d \geq 3
        \end{cases} \,.
    \end{align*}
\end{lemma}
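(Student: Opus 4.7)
The plan is to invoke the two previously stated results in sequence. First, by \prettyref{lem:matching-cost-soar}, the $\mathsf{SOAR}$ algorithm applied to the balanced setting where both servers and requests are i.i.d.\ draws from $\mathbb{D}$ achieves expected cost exactly $\sum_{t=1}^{n} \tfrac{1}{t} \cdot \opt(t)$. Second, \prettyref{lem:optoff-smooth-ub} gives, for any $\beta$-smooth $\mathbb{D}$ on $[0,1]^d$, the upper bounds $\opt(t) \leq O(\sqrt{t})$ for $d=1$, $\opt(t) \leq O(\sqrt{t \log t})$ for $d=2$, and $\opt(t) \leq O(t^{1-1/d})$ for $d \geq 3$. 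So the whole task reduces to estimating the corresponding partial sums.

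Concretely, I would take $\calA = \mathsf{SOAR}$ and bound $\costA(n) \leq \sum_{t=1}^n \tfrac{1}{t}\, \opt(t)$ in each of the three regimes:
\begin{itemize}
\item For $d=1$: $\sum_{t=1}^{n} \tfrac{1}{t} \cdot O(\sqrt{t}) = O\bigl(\sum_{t=1}^{n} t^{-1/2}\bigr) = O(\sqrt{n})$.
\item For $d=2$: $\sum_{t=1}^{n} \tfrac{1}{t} \cdot O(\sqrt{t \log t}) = O\bigl(\sum_{t=1}^{n} \sqrt{\log t}\, t^{-1/2}\bigr) = O(\sqrt{n \log n})$, where the last equality follows by comparing the sum to the integral $\int_1^n \sqrt{\log x}\, x^{-1/2}\, dx$.
\item For $d \geq 3$: $\sum_{t=1}^{n} \tfrac{1}{t} \cdot O(t^{1-1/d}) = O\bigl(\sum_{t=1}^{n} t^{-1/d}\bigr) = O(n^{1-1/d})$, since $1/d < 1$ so the sum is dominated by the top end.
\end{itemize}

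There is no real obstacle here: both ingredients are already proven, and the bookkeeping amounts to three one-line integral estimates. The only minor care is in the $d=2$ case, where one needs to note that $\sqrt{\log t}$ varies slowly enough that $\int_1^n \sqrt{\log x}/\sqrt{x}\, dx = \Theta(\sqrt{n \log n})$, which follows either by an integration-by-parts check or by noting that $\sqrt{\log x}$ is sandwiched between constants on any dyadic window $[n/2, n]$. Assembling these three computations gives the desired bounds, completing the proof.
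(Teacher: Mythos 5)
Your proposal is correct and follows the paper's proof essentially verbatim: take $\calA$ to be the $\mathsf{SOAR}$ algorithm from \prettyref{lem:matching-cost-soar}, plug in the $\opt(t)$ bounds from \prettyref{lem:optoff-smooth-ub}, and estimate the resulting partial sums by comparison with an integral. The only cosmetic difference is in the $d=2$ case, where the paper pulls the factor $\sqrt{\log t}\leq\sqrt{\log n}$ out of the sum before integrating while you integrate $\sqrt{\log x}/\sqrt{x}$ directly; both give $O(\sqrt{n\log n})$.
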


\begin{proof}
    Let $\calA$ be the algorithm as per \Cref{lem:matching-cost-soar}, and we estimate the cost of $\calA$ for $d = 1$, $d = 2$, and $d \geq 3$ separately.
    Recall that an upper bound for $\optD(n)$ is given by \Cref{lem:optoff-any-ub}.
    For $d = 1$,
    \begin{align*}
        \cost_{\mathbb{D}}(\calA; n)
        = \sum_{t=1}^n \frac{1}{t} \cdot \opt_{\mathbb{D}}(t)
        = O \left( \sum_{t=1}^n \frac{1}{\sqrt{t}} \right)
        = O \left( \int_1^n \frac{1}{\sqrt{x}} \mathrm{d}x \right)
        = O\left( \sqrt{n} \right) \,.
    \end{align*}
    For $d = 2$,
    \begin{align*}
        \cost_{\mathbb{D}}(\calA; n)
        = \sum_{t=1}^n \frac{1}{t} \cdot \opt_{\mathbb{D}}(t)
        = O \left( \sum_{t=1}^n \sqrt{\frac{\log t}{t}} \right)
        = O \left( \sqrt{\log n} \cdot \int_1^n \frac{1}{\sqrt{x}} \mathrm{d}x \right)
        = O \left(\sqrt{n \log n}\right) \,.
    \end{align*}
    For $d \geq 3$,
    \begin{align*}
        \cost_{\mathbb{D}}(\calA; n)
        = \sum_{t=1}^n \frac{1}{t} \cdot \opt_{\mathbb{D}}(t)
        = O \left( \sum_{t=1}^n \frac{1}{t^{1 / d}} \right)
        = O \left( \int_1^n \frac{1}{x^{1 / d}} \mathrm{d}x \right)
        = O\left( n^{1 - \frac{1}{d}} \right) \,.
    \end{align*}
    Combining all the above concludes the proof.
\end{proof}

Finally, combining \Cref{lem:ratio-soar} with the lower bounds for $\optD(n)$ given by \Cref{lem:optoff-smooth-lb} concludes the proof of \Cref{thm:smooth-adv-cr}.

\subsection{Unbalanced Markets}
\label{sec:unbal-smooth}

We extend our results for balanced markets to unbalanced markets in the following theorem.

\begin{theorem}\label{thm:ratio-ahg-excess}
    Suppose that $m = \kappa n$ and $\mathbb{D}$ is a $\beta$-smooth distribution over $[0, 1]^d$ for $\kappa \geq 1$ and $\beta \geq 1$.
    Then, in the semi-stochastic setting, there exists an algorithm with a competitive ratio of $O(\beta\kappa\sqrt{n})$ for $d = 1$, $O(\sqrt{\beta\kappa\log n})$ for $d = 2$, and $O((\beta\kappa)^{\frac{1}{d}})$ for $d \geq 3$.
\end{theorem}

\begin{proof}
    Let $\calA$ be the algorithm as per \Cref{lem:ratio-soar}.
    By \Cref{thm:framework-stocas-advers}, there exists an algorithm $\calA'$ such that $\costD(\calA'; S, n) \leq \optD(S, n) + \costD(\calA; n)$ for every set of servers $S$ with $|S| \geq n$.
    Combining the lower bounds for $\optD(S, n)$ given by \Cref{lem:lb-optoff-excess} concludes the proof.
\end{proof}
\section{Semi-Stochastic Setting: Euclidean Metric with General Distributions}
\label{sec:euc-arbi}

In this section, we consider the semi-stochastic setting and assume the metric space to be $[0, 1]^d$ with the Euclidean distance.
For general distributions, we give an algorithm for unbalanced markets with an arbitrary number of excess servers, whose regret nearly matches the lower bounds established by \cite{DBLP:conf/sigecom/ChenKKZ23} for balanced markets and the stochastic setting.

\begin{theorem}\label{thm:any-dis-regret-ub}
    Suppose that $\mathbb{D}$ is an arbitrary distribution over $[0, 1]^d$.
    Then, there exists an algorithm $\calA'$ with regret
    \begin{align*}
        \regret(\calA'; S, n)
        \leq \begin{cases}
            O(n^{-\frac{1}{2}}) & d = 1\\
            O(n^{-\frac{1}{2}} \sqrt{\log n}) & d = 2\\
            O(n^{- \frac{1}{d}}) & d \geq 3
        \end{cases}
    \end{align*}
    for every set of servers $S$ with $|S| \geq n$.
\end{theorem}

\begin{proof}
    By \Cref{lem:ratio-soar}, there exists an algorithm $\calA$ such that
    \begin{align*}
        \costD(\calA; n) \leq
        \begin{cases}
            O(\sqrt{n}) & d = 1\\
            O(\sqrt{n \log n}) & d = 2\\
            O(n^{1 - \frac{1}{d}}) & d \geq 3
        \end{cases}\,.
    \end{align*}
    By \Cref{thm:framework-stocas-advers}, we obtain an algorithm $\calA'$ such that $\costD(\calA'; S, n) \leq \optD(S, n) + \costD(\calA; n)$ for every set of servers $S$ with $|S| \geq n$.
    As a result, the regret of $\calA'$ equals
    \begin{align*}
        \regret(\calA'; S, n)
        = \frac{1}{n} \left( \costD(\calA'; S, n) - \optD(S, n) \right)
        \leq \frac{\costD(\calA; n)}{n}\,,
    \end{align*}
    which concludes the proof.
\end{proof}

We briefly explain the contrast that in unbalanced markets with $d \leq 2$, our algorithm exhibits nearly optimal regret (\Cref{thm:any-dis-regret-ub}) with a suboptimal competitive ratio (\Cref{thm:ratio-ahg-excess}).
Intuitively, our regret analysis indicates that $\costD(\calA; S, n) - \optD(S, n)$ is small, yet we have no control over the magnitude of $\optD(S, n)$ even when $\bbD$ is smooth, which prevents us from achieving a better competitive ratio.
\section{Prediction Setting}

In this section, we focus on balanced markets and the adversarial setting.
We assume that the algorithm is equipped with a prediction of request locations in advance.
More formally, the prediction is given in the form of a set of predicted request locations $R'$ with $|R'| = n$, which is provided before the arrival of any request, and the prediction error is defined as $\eta := \opt(R', R)$.
We reiterate that the prediction error is not known to the algorithm.
Throughout, we slightly abuse notations and use $\opt := \opt(S, R)$ to denote the optimal offline cost.
For an algorithm $\calA$ that leverages predictions, we use $\calA(R')$ to indicate that the prediction provided to $\calA$ is $R'$.

Following the literature on algorithm with predictions~\cite{DBLP:journals/talg/AntoniadisCEPS23,DBLP:conf/sigecom/BergerFGT24}, we define the \emph{consistency} of an algorithm as its competitive ratio when the prediction is perfect, i.e., when $\eta = 0$, and the \emph{robustness} of an algorithm as its worst-case competitive ratio over all predictions.
In other words, if an algorithm $\calA$ is $\alpha_1$-consistent and $\alpha_2$-robust, then for all $S$ and $R$, $\cost(\calA(R'); S, R) \leq \alpha_1 \cdot \opt(S, R)$ when $\opt(R', R) = 0$, and $\cost(\calA(R'); S, R) \leq \alpha_2 \cdot \opt(S, R)$ for every prediction $R'$.
We provide our main theorem in this section as follows, which gives a consistent and robust algorithm.
Moreover, the performance of the algorithm degrades smoothly as the prediction error increases.

\begin{theorem}\label{thm:comp-alg}
    Suppose that $m = n$ and consider the adversarial setting.
    Then, given an algorithm $\calA$ that does not use predictions and is $\alpha$-competitive, there exists an algorithm $\calA'$ that leverages predictions and is $O(\min\{\alpha, 1+ \frac{\eta (\alpha + 1) }{\opt}\})$-competitive.
    In particular, $\calA'$ is $O(1)$-consistent and $O(\alpha)$-robust.
    Moreover, if $\calA$ is deterministic, then $\calA'$ is also deterministic.
\end{theorem}

The proof of \Cref{thm:comp-alg} is deferred to \Cref{sec:pred-non-consi}, and we now apply \Cref{thm:comp-alg} to the existing algorithms in the literature.
Recall that without predictions, there exist deterministic $(2n-1)$-competitive algorithms for general metrics \cite{DBLP:journals/jal/KalyanasundaramP93,DBLP:journals/tcs/KhullerMV94,DBLP:conf/approx/Raghvendra16}, a deterministic $O(\log n)$-competitive algorithm for the line metric \cite{DBLP:conf/compgeom/Raghvendra18}, a randomized $O(\log^2 n)$-competitive algorithm for general metrics \cite{DBLP:journals/algorithmica/BansalBGN14}, and a randomized $O(\log n)$-competitive algorithm for doubling metrics \cite{DBLP:conf/icalp/GuptaL12}.
We derive the following corollary.

\begin{corollary}\label{cor:pred-metric}
    For general metrics, there exist a deterministic $O(\min\{n, 1 + \frac{\eta \cdot n}{\opt})\}$-competitive algorithm and a randomized $O(\min\{\log^2 n, 1 + \frac{\eta \cdot \log^2 n}{\opt} \})$-competitive algorithm.
    For the line metric, there exists a deterministic $O(\min\{\log n, 1 + \frac{\eta \cdot \log n}{\opt})\}$-competitive algorithm.
    For doubling metrics, there exists a randomized $O(\min\{\log n, 1 + \frac{\eta \cdot \log n}{\opt})\}$-competitive algorithm.
\end{corollary}

\subsection{Proof of \Cref{thm:comp-alg}}
\label{sec:pred-non-consi}

We describe our high-level proof strategy as follows.
We first apply \Cref{thm:alg-with-addi-info} to give a consistent algorithm whose performance is optimal with perfect predictions and smoothly depends on the prediction error.
Then, we robustify this algorithm by combining it with the provided algorithm $\calA$, which is by nature robust yet inconsistent.

We start by presenting a consistent algorithm using \Cref{thm:alg-with-addi-info}.

\begin{lemma}\label{lem:alg-nonrobust}
    Given an $\alpha$-competitive algorithm $\calA$ that does not use predictions, there exists an algorithm $\calA'$ that leverages predictions and is $(1 + \frac{\eta (\alpha + 1)}{\opt})$-competitive.
    In particular, $\calA'$ is $1$-consistent.
    Moreover, if $\calA$ is deterministic, then $\calA'$ is also deterministic.
\end{lemma}

\begin{proof}
    We construct $\calA'$ as follows.
    Given as input a set of servers $S$, a tuple of requests $R$, and a prediction of request locations $R'$ with $|S| = |R| = |R'|$ and $\eta = \opt(R, R')$, let $\calA'$ be the algorithm as per \Cref{thm:alg-with-addi-info} by setting $I = R'$.
    It follows that,
    \begin{align*}
        \cost(\calA'(R'); S, R)
        &\leq \opt(S, R') + \cost(\calA; R', R)
        \leq \opt(S, R) + \opt(R, R') + \cost(\calA; R', R)\\
        &\leq \opt(S, R) + \opt(R, R') + \alpha \cdot \opt(R', R)
        = \opt(S, R) + \eta (\alpha + 1),
    \end{align*}
    where the first inequality holds by the guarantee of \Cref{thm:alg-with-addi-info}, the second inequality holds by \Cref{lem:tri-opt}, and the third inequality holds due to the competitive ratio of $\calA$.
    Finally, if $\calA$ is deterministic, then $\calA'$ being deterministic follows from the guarantee of \Cref{thm:alg-with-addi-info}.
\end{proof}

Next, we robustify the consistent algorithm given by \Cref{lem:alg-nonrobust}.
A powerful machinery for deterministically combining several algorithms into one in a black-box manner is proposed by \cite{DBLP:journals/jcss/FiatRR94} for the $k$-server problem and is subsequently applied by \cite{DBLP:journals/jacm/LykourisV21,DBLP:journals/talg/AntoniadisCEPS23} under the context of online algorithm with predictions.
We will use the same machinery to combine our consistent but non-robust algorithm with the given algorithm $\calA$ that does not use predictions, which is by definition robust but inconsistent, leading to a robust and consistent algorithm.

We adapted the framework for combining several online algorithms in \cite{DBLP:journals/talg/AntoniadisCEPS23} to online metric matching, which is a direct consequence of combining \cite[Theorem 18 and Lemma 21]{DBLP:journals/talg/AntoniadisCEPS23}.

\begin{lemma}[Theorem $18$ and Lemma $21$ in \cite{DBLP:journals/talg/AntoniadisCEPS23}]\label{lem:det-comb}
    Given $k$ deterministic algorithms $\calA_1, \ldots, \calA_k$ for online metric matching, for every $1 < \gamma \leq 2$, there exists a deterministic algorithm $\calA^{\com}$ for online metric matching such that for all $S$ and $R$ with $|S| = |R|$,
    \begin{align*}
        \cost(\calA^{\com}; S, R)
        \leq \left( \frac{2\gamma^k}{\gamma - 1} + 1 \right) \cdot \min_{i \in [k]} \{\cost(\calA_i; S, R)\}.
    \end{align*}
\end{lemma}

We generalize \Cref{lem:det-comb} to combine randomized algorithms in the following lemma.

\begin{lemma}\label{lem:ran-comb}
    Given $k$ randomized algorithms $\calA_1, \ldots, \calA_k$ for online metric matching, for every $1 < \gamma \leq 2$, there exists a randomized algorithm $\calA^{\com}$ for online metric matching such that for all $S$ and $R$ with $|S| = |R|$,
    \begin{align*}
        \cost(\calA^{\com}; S, R)
        \leq \left( \frac{2\gamma^k}{\gamma - 1} + 1 \right) \cdot \min_{i \in [k]} \{\cost(\calA_i; S, R)\}.
    \end{align*}
\end{lemma}

\begin{proof}
    For each randomized algorithm $\calA_i$, we use $\cost(\calA_i; S, R \mid B)$ to denote the cost incurred by $\calA_i$ with servers $S$ and requests $R$ conditional on the random bits used by $\calA_i$ being $B = b_1 b_2 \cdots$.
    To construct the algorithm $\calA^{\com}$, we first sample the random bits $B = b_1 b_2 \cdots$ used by $\calA_1, \ldots, A_k$.
    Conditional on $B$, $\calA_1, \ldots, \calA_k$ become deterministic, and hence we can apply \Cref{lem:det-comb} to combine $\calA_1, \ldots, \calA_k$.
    By the guarantee of \Cref{lem:det-comb}, for all $S$ and $R$, it holds that
    \begin{align*}
        \cost(\calA^{\com}; S, R)
        &\leq \Expect_B \left[ \left( \frac{2\gamma^k}{\gamma - 1} + 1 \right) \cdot \min_{i \in [k]} \{\cost(\calA_i; S, R \mid B)\} \right]\\
        &\leq \left( \frac{2\gamma^k}{\gamma - 1} + 1 \right) \cdot \min_{i \in [k]} \{\Expect_B [\cost(\calA_i; S, R \mid B)]\}\\
        &= \left( \frac{2\gamma^k}{\gamma - 1} + 1 \right) \cdot \min_{i \in [k]} \{\cost(\calA_i; S, R)\} \,,
    \end{align*}
    where the second inequality follows from Jensen's inequality.
\end{proof}

Now, we are ready to prove \Cref{thm:comp-alg}.

\begin{proof}[Proof of \Cref{thm:comp-alg}]
    We only prove the statement for deterministic algorithms by using Lemmas~\ref{lem:alg-nonrobust} and~\ref{lem:det-comb}, and the statement for randomized algorithms can be established analogously by applying Lemmas~\ref{lem:alg-nonrobust} and~\ref{lem:ran-comb}.
    Given a deterministic $\alpha$-competitive algorithm $\calA$ that does not use predictions, by \Cref{lem:alg-nonrobust}, there exists a deterministic $(1 + \frac{\eta (\alpha + 1)}{\opt})$-competitive algorithm $\calA'$ that leverages the prediction $R'$.
    By applying \Cref{lem:det-comb} with $\gamma = 2$ to combine $\calA$ and $\calA'$, we obtain a deterministic algorithm $\calA^{\com}$ such that
    \begin{align*}
        \cost(\calA^{\com}; S, R)
        \leq O(\min\{\cost(\calA; S, R), \cost(\calA'(R'); S, R)\})
    \end{align*}
    for all $S$ and $R$ with $|S| = |R| = |R'|$, which implies that $\calA^{\com}$ is $O(\min\{\alpha, 1 + \frac{\eta (\alpha + 1)}{\opt}\})$-competitive by the competitive ratios of $\calA$ and $\calA'$, concluding the proof.
\end{proof}
\section{Adversarial Setting}

In this section, we focus on the adversarial setting without predictions.
This setting has been extensively studied in the past \cite{DBLP:journals/algorithmica/BansalBGN14,DBLP:conf/compgeom/Raghvendra18}, whereas all known guarantees are established in balanced markets.
We show that the competitive ratios in balanced markets of greedy algorithms satisfying certain natural conditions can be carried over to unbalanced markets.
Combined with the algorithms of prior work, we obtain the first competitive ratio guarantees for unbalanced markets and the adversarial setting, which in turn leads to improved guarantees for unbalanced markets and the semi-stochastic setting.
In particular, we show that both the $O(\log^3 n)$-competitive algorithm of \cite{DBLP:conf/soda/MeyersonNP06} for general metrics and the $O(\log n)$-competitive algorithm of \cite{DBLP:conf/icalp/GuptaL12} for doubling metrics retain their competitive ratios in unbalanced markets.
The latter result also improves the $O(\sqrt{n})$ competitive ratio for $d = 1$ in \Cref{thm:ratio-ahg-excess}.

Given the execution of an algorithm $\calA$ on an instance with initial servers $S$, we use $N_{\calA}(r \mid S)$ to denote the set of available servers closest to the request $r$ upon its arrival.
We say that an algorithm $\calA$ is \emph{greedy} if it always (randomly) matches a new request to one of the closest available servers, i.e., it matches $r$ to a server in $N_{\calA}(r \mid S)$.
In particular, every greedy algorithm is completely determined by a (randomized) tie-breaking rule which, when facing multiple closest available servers upon the arrival of a new request $r$, (randomly) decide which server in $N_{\calA}(r \mid S)$ to match $r$ with.
We say that the tie-breaking rule of a greedy algorithm $\calA$ is \emph{local} if it only depends on the set of closest available servers $N_{\calA}(r \mid S)$.
Moreover, the local tie-breaking rule of a greedy algorithm $\calA$ is called \emph{non-increasing} if the probability of selecting each server in $N_{\calA}(r \mid S)$ does not increase after adding additional servers into $N_{\calA}(r \mid S)$, i.e., for all sets of servers $S', S''$ with $S' \subseteq S''$, for every $s \in S'$,
\begin{align*}
    \prob{\calA \text{ matches } r \text{ to } s \mid N_{\calA}(r \mid S) = S'}
    \geq \prob{\calA \text{ matches } r \text{ to } s \mid N_{\calA}(r \mid S) = S''}\,.
\end{align*}

We show that the competitive ratio of every greedy algorithm with a local and non-increasing tie-breaking rule in balanced markets is retained in unbalanced markets.

\begin{theorem}\label{thm:greedy-bal-to-unbal}
    For the adversarial setting, if a greedy algorithm $\calA$ with a local and non-increasing tie-breaking rule is $\alpha$-competitive in balanced markets, then $\calA$ is $\alpha$-competitive in unbalanced markets.
\end{theorem}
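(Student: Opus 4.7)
The plan is to prove a monotonicity statement for greedy algorithms with local, non-increasing tie-breaking: adding excess servers to the pool can only (weakly) decrease the algorithm's cost. Concretely, the key claim is that for every set of servers $S$, every subset $S' \subseteq S$ with $|S'| \geq |R|$, and every request sequence $R$,
\[
\cost_\calA(S, R) \leq \cost_\calA(S', R).
\]
Given this claim, the theorem is immediate: taking $S^* \subseteq S$ to be any set of $n$ servers realizing the optimal offline cost on $(S, R)$, one has $\opt(S, R) = \opt(S^*, R)$, and chaining the claim with the assumed balanced-market guarantee on $(S^*, R)$ yields $\cost_\calA(S, R) \leq \cost_\calA(S^*, R) \leq \alpha \cdot \opt(S^*, R) = \alpha \cdot \opt(S, R)$.

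To establish the monotonicity claim it suffices to handle the removal of a single excess server $s$ and then iterate. I would construct a step-by-step coupling between the runs of $\calA$ on $(S, R)$ and on $(S \setminus \{s\}, R)$ that shares all internal randomness and maintains the following invariant: after every processed request, the two available-server pools $A$ and $A'$ satisfy $A = A' \cup \{s^\star\}$ for a single distinguished ``extra'' server $s^\star$, and the two runs have incurred equal accumulated costs. Initially $s^\star = s$; its identity may change along the way, but its existence is preserved.

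At each step with request $r$, let $d = \min_{u \in A} \delta(r, u)$ and $d' = \min_{u \in A'} \delta(r, u)$, so that $d \leq d'$. If $d < d'$, then $s^\star$ must be the unique closest server in $A$, forcing the $S$-run to pick $s^\star$ and pay $d$, while the $S'$-run pays $d'$ on some $u \in A'$; the invariant reinstates with new extra $u$, and the $S$-run pays strictly less. If $d = d'$ and $s^\star$ is not tied for closest, then $N_\calA(r \mid S) = N_\calA(r \mid S')$, locality ensures both runs sample from the same distribution, and shared randomness keeps them synchronized. The interesting case is $d = d'$ with $s^\star$ tied for closest, where $N_\calA(r \mid S) = N_\calA(r \mid S') \cup \{s^\star\}$. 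Here the non-increasing hypothesis gives that each non-extra candidate $u$ has selection probability $p_u$ in the $S$-run no larger than its probability $q_u$ in the $S'$-run, so the surplus $\sum_u (q_u - p_u)$ equals exactly the mass $p_{s^\star}$ placed on $s^\star$ by the $S$-run. This permits a coupling in which either both runs pick the same $u$ with probability $p_u$ (invariant preserved) or the $S$-run picks $s^\star$ while the $S'$-run picks some $u$ with marginal $q_u - p_u$ (so $u$ becomes the new extra); in both outcomes, both runs pay $d$.

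The main obstacle is ensuring the coupling goes through in the tied case, which is exactly where the non-increasing property is essential: without it, the $S'$-run could place less probability than the $S$-run on some non-extra candidate, making the swap-in-the-extra coupling infeasible. Once the invariant is maintained across all $n$ steps, per-step cost monotonicity yields $\cost_\calA(S, R) \leq \cost_\calA(S \setminus \{s\}, R)$, and iterating over the $|S| - |R|$ excess servers proves the claim and hence the theorem.
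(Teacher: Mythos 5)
Your proof is correct and hinges on exactly the same key lemma as the paper's: monotonicity of $\cost_\calA(\cdot, R)$ under server deletion, $\cost_\calA(S, R) \leq \cost_\calA(S', R)$ for $S' \subseteq S$ with $|S'| \geq |R|$, which is then applied to the set $S^*$ of matched servers in the offline optimum.

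Where you differ is in how that lemma is proved. The paper handles a general subset $S' \subseteq S$ directly and proceeds by induction on $|R|$: it dichotomizes on whether $N_\calA(r \mid S) \cap N_\calA(r \mid S') = \emptyset$ or $N_\calA(r \mid S') \subseteq N_\calA(r \mid S)$, and in the second case resolves the comparison via an algebraic rearrangement that bounds the difference by introducing $B = \min_{s'} Q_{S'}(s')$. You instead reduce to the removal of a single excess server and iterate, and you argue the inductive step via an explicit step-by-step coupling that maintains $A = A' \cup \{s^\star\}$ and couples the randomness in the tied case so that both runs pay the same per-step cost. This is a genuinely cleaner mechanism for the same inductive step: the ``surplus equals $p_{s^\star}$'' observation makes transparent why the non-increasing property is exactly what is needed, whereas the paper's algebraic argument obscures that. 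One nit: you state the invariant as ``equal accumulated costs,'' but your own Case 1 ($d < d'$) shows the $S$-run can pay strictly less; the invariant you actually maintain (and use) is that the $S$-run's accumulated cost is at most the $S'$-run's. This is a phrasing slip, not a gap, and the rest of the argument is sound.
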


We remark here that the resulting competitive ratio $\alpha$ for unbalanced markets can depend on the number of requests but is independent of the number of excess servers.

\begin{proof}[Proof of \Cref{thm:greedy-bal-to-unbal}.]
    Let $\calA$ be a greedy algorithm with a local and non-increasing tie-breaking rule.
    We claim that for all sets of servers $S, S'$ and tuple of requests $R$ with $S' \subseteq S$ and $|S'| \geq |R|$, $\cost(\calA; S, R) \leq \cost(\calA; S', R)$.
    Before proving this claim, we first apply it to show the desired competitive ratio of $\calA$ in unbalanced markets.
    Let $S$ be the set of servers and $R$ be the tuple of requests with $|S| \geq |R| = n$, and let $S' \subseteq S$ satisfying $|S'| = n$ and $\opt(S, R) = \opt(S', R)$, i.e., $S'$ consists of all matched servers in $S$ in the min-cost matching between $S$ and $R$ that matches all requests in $R$.
    Since $\calA$ is $\alpha$-competitive in balanced markets,
    \begin{align*}
        \cost(\calA; S', R) \leq \alpha \cdot \opt(S', R) = \alpha \cdot \opt(S, R)\, .
    \end{align*}
    Given that $S' \subseteq S$ and $|S'| \geq |R|$, by the claim,
    \begin{align*}
        \cost(\calA; S, R)
        \leq \cost(\calA; S', R)
        \leq \alpha \cdot \opt(S, R)\,,
    \end{align*}
    implying that $\calA$ is $\alpha$-competitive in unbalanced markets.

    It remains to prove the claim, which we establish by induction on $|R|$.
    When $|R| = 0$, we have $\cost(\calA; S, R) = \cost(\calA; S', R) = 0$, and the statement straightforwardly holds.
    Assume for induction that the statement holds for $|R| = k$ with $k \geq 0$, and we show that the statement also holds for $|R| = k + 1$.
    Given $S, S', R$ with $S' \subseteq S$ and $|S'| \geq |R| = k + 1$, let $r \in R$ be the first arrived request.
    We use $M_{\calA}(r\mid S)$ and $M_{\calA}(r\mid S')$ to denote the (random) server that $\calA$ matches $r$ to when the set of initial servers is $S$ and $S'$, respectively.
    Since $\calA$ is greedy, $M_{\calA}(r \mid S) \in N_{\calA}(r \mid S)$ and $M_{\calA}(r \mid S') \in N_{\calA}(r \mid S')$.
    The following lemma asserts that either $N_{\calA}(r \mid S) \cap N_{\calA}(r \mid S') = \emptyset$ or $N_{\calA}(r \mid S') \subseteq N_{\calA}(r \mid S)$.

    \begin{lemma}\label{lem:case-empty-include}
        One of the following must hold: $N_{\calA}(r \mid S) \cap N_{\calA}(r \mid S') = \emptyset$ or $N_{\calA}(r \mid S') \subseteq N_{\calA}(r \mid S)$.
    \end{lemma}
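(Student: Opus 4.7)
The plan is to observe that since $r$ is the \emph{first} arrived request, the sets of available servers upon $r$'s arrival in the two scenarios are exactly $S$ and $S'$, respectively. So by the definition of $N_{\calA}(r \mid \cdot)$ we have
\[
N_{\calA}(r \mid S) = \arg\min_{s \in S} \delta(s,r), \qquad N_{\calA}(r \mid S') = \arg\min_{s \in S'} \delta(s,r).
\]
Let $d_S \triangleq \min_{s \in S} \delta(s,r)$ and $d_{S'} \triangleq \min_{s \in S'} \delta(s,r)$. Since $S' \subseteq S$, restricting the minimum to a smaller set can only increase it, so $d_S \leq d_{S'}$.

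I would then split into the two possible cases for this comparison. In the first case, $d_S < d_{S'}$, every server in $N_{\calA}(r \mid S)$ lies at distance exactly $d_S$ from $r$, which is strictly less than the distance $d_{S'}$ to any server in $S'$. Hence $N_{\calA}(r \mid S) \cap S' = \emptyset$, and in particular $N_{\calA}(r \mid S) \cap N_{\calA}(r \mid S') = \emptyset$. In the second case, $d_S = d_{S'}$, a server $s \in S'$ achieves the minimum distance $d_{S'}$ to $r$ within $S'$ iff $\delta(s,r) = d_S$, which (since $s \in S' \subseteq S$) also makes it achieve the minimum within $S$; this immediately yields $N_{\calA}(r \mid S') \subseteq N_{\calA}(r \mid S)$.

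There is no real obstacle here: the lemma is a direct case analysis on whether the closest-server distance strictly decreases when the server set is enlarged from $S'$ to $S$ or stays the same. The only thing I would double-check is that the statement genuinely concerns the first request (so the ``available'' qualifier in $N_{\calA}(r \mid \cdot)$ reduces to the full input sets); this is clear from the surrounding paragraph of the induction, where $r$ is introduced as the first arrived request in $R$.
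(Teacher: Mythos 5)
Your proof is correct and rests on the same core observation as the paper's: since $r$ is the first request, $N_{\calA}(r \mid S)$ and $N_{\calA}(r \mid S')$ are just the nearest-server sets within $S$ and $S'$, and $S' \subseteq S$ forces $d_S \leq d_{S'}$. The paper phrases this as a proof by contradiction (pick $s$ in the intersection and $s'$ in $N_{\calA}(r \mid S') \setminus N_{\calA}(r \mid S)$ and derive inconsistent distance comparisons), while you give a direct two-case argument on whether $d_S < d_{S'}$ or $d_S = d_{S'}$; these are essentially the same argument presented in mirror form.
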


    \begin{proof}
        Suppose for contradiction that $N_{\calA}(r \mid S) \cap N_{\calA}(r \mid S') \neq \emptyset$ and $N_{\calA}(r \mid S') \not\subseteq N_{\calA}(r \mid S)$, and let $s \in N_{\calA}(r \mid S) \cap N_{\calA}(r \mid S')$ and $s' \in N_{\calA}(r \mid S') \setminus N_{\calA}(r \mid S)$.
        Since $r$ is the first arrived request, all servers in $S$ (resp. $S'$) are available upon the arrival of $r$, and hence $N_{\calA}(r \mid S)$ (resp. $N_{\calA}(r \mid S')$) contains all servers in $S$ (resp. $S'$) closest to $r$.
        Moreover, note that $s, s' \in N_{\calA}(r \mid S') \subseteq S' \subseteq S$.
        On one hand, since $s, s' \in N_{\calA}(r \mid S')$, $\delta(s, r) = \delta(s', r)$.
        On the other hand, since $s \in N_{\calA}(r \mid S)$ and $s' \notin N_{\calA}(r \mid S)$, $\delta(s, r) < \delta(s', r)$, leading to a contradiction.
    \end{proof}
    
    We analyze the expected cost of $\calA$ under both cases in \Cref{lem:case-empty-include} separately.
    We start with the easier case where $N_{\calA}(r \mid S) \cap N_{\calA}(r \mid S') = \emptyset$, which implies that $N_{\calA}(r \mid S) \cap S' = \emptyset$ and $\delta(r, s) < \delta(r, s')$ for all $s \in N_{\calA}(r \mid S)$ and $s' \in N_{\calA}(r \mid S')$.
    By $N_{\calA}(r \mid S) \cap S' = \emptyset$, we know that $M_{\calA}(r \mid S) \notin S'$ and hence $S' \setminus \{M_{\calA}(r \mid S')\} \subseteq S' \subseteq S \setminus \{M_{\calA}(r \mid S)\}$.
    By the inductive hypothesis, $\cost(\calA; S \setminus \{M_{\calA}(r \mid S)\}, R \setminus \{r\}) \leq \cost(\calA; S' \setminus \{M_{\calA}(r \mid S')\}, R \setminus \{r\})$.
    As a result,
    \begin{align*}
        \cost(\calA; S, R)
        &= \Expect[\delta(r, M_{\calA}(r \mid S)) + \cost(\calA; S \setminus \{M_{\calA}(r \mid S)\}, R \setminus \{r\})]\\
        &< \Expect[\delta(r, M_{\calA}(r \mid S')) + \cost(\calA; S' \setminus \{M_{\calA}(r \mid S')\}, R \setminus \{r\})]\\
        &= \cost(\calA; S', R)\,,
    \end{align*}
    as desired.

    Next, we consider the case where $N_{\calA}(r \mid S') \subseteq N_{\calA}(r \mid S)$, which implies that $\delta(r, s) = \delta(r, s')$ for all $s, s' \in N_{\calA}(r \mid S)$.
    For every $s \in N_{\calA}(r \mid S)$ and $s' \in N_{\calA}(r \mid S')$, define
    \begin{align*}
        P_S(s) &:= \prob{M_{\calA}(r \mid S) = s}, & Q_S(s) &:= \cost(\calA; S \setminus \{s\}, R \setminus \{r\}) \, ,\\
        P_{S'}(s') &:= \prob{M_{\calA}(r \mid S') = s'}, & Q_{S'}(s') &:= \cost(\calA; S' \setminus \{s'\}, R \setminus \{r\}) \, .
    \end{align*}
    Since the tie-breaking rule of $\calA$ is local and non-increasing, $P_{S'}(s) \geq P_S(s)$ for every $s \in N_{\calA}(r \mid S')$.
    As a result, 
    \begin{align*}
        &\cost(\calA; S, R) - \cost(\calA; S', R)\\
        & = \sum_{s \in N_{\calA}(r \mid S)} P_S(s) \cdot (\delta(r, s) + Q_S(s)) - \sum_{s \in N_{\calA}(r \mid S')} P_{S'}(s) \cdot (\delta(r, s) + Q_{S'}(s))\\
        & = \sum_{s \in N_{\calA}(r \mid S)} P_S(s) \cdot Q_S(s) - \sum_{s \in N_{\calA}(r \mid S')} P_{S'}(s) \cdot Q_{S'}(s)\\
        & = \sum_{s \in N_{\calA}(r \mid S')} (P_S(s) \cdot Q_S(s) - P_{S'}(s) \cdot Q_{S'}(s)) + \sum_{s \in N_{\calA}(r \mid S) \setminus N_{\calA}(r \mid S')} P_S(s) \cdot Q_S(s)\\
        & \leq \sum_{s \in N_{\calA}(r \mid S')} Q_{S'}(s) \cdot (P_S(s) - P_{S'}(s)) + \sum_{s \in N_{\calA}(r \mid S) \setminus N_{\calA}(r \mid S')} P_S(s) \cdot Q_S(s) \,,
    \end{align*}
    where the second equality holds since $\delta(r, s) = \delta(r, s')$ for all $s, s' \in N_{\calA}(r \mid S)$, and the inequality holds since $Q_S(s) \leq Q_{S'}(s)$ for every $s \in N_{\calA}(r \mid S')$ by the inductive hypothesis.
    Note that for all $s \in N_{\calA}(r \mid S) \setminus N_{\calA}(r \mid S')$ and $s' \in N_{\calA}(r \mid S')$, we have $S' \setminus \{s'\} \subseteq S' \subseteq S \setminus \{s\}$ and hence $Q_S(s) \leq Q_{S'}(s')$ by the inductive hypothesis.
    Let $B := \min_{s' \in N_{\calA}(r \mid S')} Q_{S'}(s')$, and it follows that $Q_S(s) \leq B \leq Q_{S'}(s')$ for all $s \in N_{\calA}(r \mid S) \setminus N_{\calA}(r \mid S')$ and $s' \in N_{\calA}(r \mid S')$.
    Therefore,
    \begin{align*}
        &\cost(\calA; S, R) - \cost(\calA; S', R) \\
        &\leq \sum_{s \in N_{\calA}(r \mid S')} Q_{S'}(s) \cdot (P_S(s) - P_{S'}(s)) + \sum_{s \in N_{\calA}(r \mid S) \setminus N_{\calA}(r \mid S')} P_S(s) \cdot Q_S(s)\\
        &\leq \sum_{s \in N_{\calA}(r \mid S')} B \cdot (P_S(s) - P_{S'}(s)) + \sum_{s \in N_{\calA}(r \mid S) \setminus N_{\calA}(r \mid S')} P_S(s) \cdot B\\
        &= B \cdot \left( \sum_{s \in N_{\calA}(r \mid S)} P_S(s) - \sum_{s \in N_{\calA}(r \mid S')} P_{S'}(s) \right)\\
        &= 0\,,
    \end{align*}
    where the second inequality holds since $P_{S'}(s) \geq P_S(s)$ for every $s \in N_{\calA}(r \mid S')$.
    This concludes the proof.
\end{proof}

\subsection{Applications to Existing Algorithms}
\label{sec:apply-hst}

Now, we apply \Cref{thm:greedy-bal-to-unbal} to algorithms of prior work.
Recall that for balanced markets and the adversarial setting, the algorithm of \cite{DBLP:conf/soda/MeyersonNP06}, abbreviated as the \emph{MNP} algorithm, is $O(\log^3 n)$-competitive for general metrics, and the \emph{Random-Subtree} algorithm of \cite{DBLP:conf/icalp/GuptaL12} is $O(\log n)$-competitive for doubling metrics.
We show that the competitive ratios of both algorithms are retained in unbalanced markets.

Both the MNP and Random Subtree algorithms are designed based on the same framework.
They first randomly embed the original metric into a \emph{Hierarchically well-Separated Tree (HST)} metric, which preserves the distance between every pair of locations up to an $O(\log n)$ distortion.
Next, they show that a randomized greedy subroutine with a specific tie-breaking rule is $\beta$-competitive for the HST metric.
Overall, this leads to an $O(\beta \log n)$-competitive algorithm for the original metric.
For our purpose, it suffices to show that the tie-breaking rule of the greedy subroutine for the HST metric is local and non-increasing.

Recall that an HST is an edge-weighted rooted tree satisfying the following property: For each node $x$ of the HST, the distances from $x$ to all leaf nodes in the subtree rooted at $x$ are identical.\footnote{The definition of HSTs also includes other conditions, but they are not needed for us.}
Moreover, the metric embedding phase of the algorithms will map each location in the original metric to a leaf node of the HST.
From now on, we will restrict our attention to the HST metric with all servers and requests located at the HST's leaf nodes.

We start by analyzing the greedy subroutine in the MNP algorithm, which works as follows.
Upon the arrival of a new request $r$, assuming the set of initial servers to be $S$, denote $A(r \mid S)$ as the lowest ancestor of $r$ the subtree rooted at whom contains at least one available server, and the subroutine matches $r$ to a uniformly random available server in the subtree rooted at $A(r \mid S)$.
In the following lemma, we show that this tie-breaking rule is local and non-increasing.

\begin{lemma}\label{lem:tie-BBGN}
    The tie-breaking rule of the greedy subroutine in the MNP algorithm is local and non-increasing.
\end{lemma}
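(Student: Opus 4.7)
The plan is to first establish a single structural fact about the HST metric, then derive both properties from it as immediate consequences. The structural fact I aim to prove is: for any request $r$ and any set of initial servers $S$, the set $N_{\calA}(r \mid S)$ of closest available servers to $r$ equals precisely the set of all available servers in the subtree rooted at $A(r \mid S)$.

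To prove this structural claim, I will invoke the defining HST property: for every internal node $v$, every leaf in the subtree rooted at $v$ lies at a common distance $\ell(v)$ from $v$. Since $A(r \mid S)$ is by definition the \emph{lowest} ancestor of $r$ whose subtree contains an available server, the child of $A(r \mid S)$ on the path to $r$ contains no available server in its subtree. Consequently, for any available server $s$ in the subtree rooted at $A(r \mid S)$, the lowest common ancestor of $r$ and $s$ must be $A(r \mid S)$ itself, so $\delta(r,s) = 2\ell(A(r \mid S))$. Conversely, for any available server $s$ outside the subtree of $A(r \mid S)$, the lowest common ancestor of $r$ and $s$ lies strictly above $A(r \mid S)$, yielding a strictly larger distance. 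This identifies $N_{\calA}(r \mid S)$ as claimed.

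With the structural fact in hand, both properties in the definition follow in one line each. For locality, the subroutine samples $r$'s match uniformly from the available servers in the subtree rooted at $A(r \mid S)$, which by the structural fact equals $N_{\calA}(r \mid S)$; hence the distribution depends on $S$ only through $N_{\calA}(r \mid S)$. For the non-increasing property, if $S' \subseteq S''$ are two possible values of $N_{\calA}(r \mid S)$ and $s \in S'$, the corresponding conditional probabilities are $1/|S'|$ and $1/|S''|$ respectively, and $|S'| \leq |S''|$ yields the required inequality. I do not expect any genuine obstacle; the only point requiring care is the application of the HST distance formula in pinning down $N_{\calA}(r \mid S)$, which reduces to tracking lowest common ancestors.
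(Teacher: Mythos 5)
Your proof is correct and takes essentially the same approach as the paper's. The non-increasing argument is identical; for locality you make explicit, via the HST distance structure and lowest common ancestors, the fact that $N_{\calA}(r \mid S)$ coincides with the set of available servers in the subtree rooted at $A(r \mid S)$, which the paper dispenses with as holding \emph{by definition} but which is indeed the structural observation underpinning both the claim that the subroutine is greedy and the locality of its tie-breaking rule.
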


The proof of \Cref{lem:tie-BBGN} is deferred to \Cref{sec:proof-tie-BBGN}.
The following corollary is a direct consequence of combining \Cref{thm:greedy-bal-to-unbal} and \Cref{lem:tie-BBGN} with the competitive ratio of the MNP algorithm in balanced markets~\cite{DBLP:conf/soda/MeyersonNP06}.

\begin{corollary}\label{cor:adv-arbi-metric}
    In the adversarial setting and unbalanced markets, there exists an $O(\log^3 n)$-competitive algorithm for general metrics.
\end{corollary}

We remark here that in the adversarial setting and unbalanced markets, an $O(\log^2 n)$-competitive algorithm for general metrics is known~\cite{DBLP:journals/algorithmica/BansalBGN14}.
However, the subroutine of this algorithm is not greedy, and hence \Cref{thm:greedy-bal-to-unbal} does not apply.

Next, we analyze the greedy subroutine in the Random-Subtree algorithm, which works as follows.
Upon the arrival of a new request $r$, starting off at $A(r \mid S)$, the subroutine repeatedly moves to a uniformly random subtree of the current node that contains at least one available server, until a leaf node is reached.
Then, $r$ is matched to a uniformly random available server at this leaf node.
We show in the following lemma that this tie-breaking rule is local and non-increasing.

\begin{lemma}\label{lem:random-subtree-tie}
    The tie-breaking rule of the greedy subroutine in the Random Subtree algorithm is local and non-increasing.
\end{lemma}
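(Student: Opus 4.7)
The plan is to leverage the defining HST property that every subtree has all its leaves equidistant from its root. First I would establish a structural claim: in the HST, $N_{\calA}(r \mid S)$ coincides exactly with the set of available servers lying in the subtree rooted at $A(r \mid S)$. Indeed, by the HST property every leaf of that subtree is at the common distance $2 \cdot \mathrm{rad}(A(r \mid S))$ from $r$, while the minimality of $A(r \mid S)$ among $r$-ancestors with any available descendant forces every other available server to be in a proper super-subtree and hence strictly farther from $r$. This observation simultaneously gives locality: the random walk executed by the subroutine, from $A(r \mid S)$ down to a leaf and then a uniform choice among servers at that leaf, depends only on which children of each intermediate node have available descendants and on how many servers sit at each leaf, and both pieces of data are determined by $N_{\calA}(r \mid S)$ alone.

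For the non-increasing property, fix $S' \subseteq S''$ and $s \in S'$. I would first note that both conditionings induce the same root $A$: since every server in a closest-set shares the same distance to $r$, it shares the same LCA with $r$, namely $A$; hence $A$ is determined by $S'$ (equivalently by $S''$) and the two agree. Let $A = v_0, v_1, \ldots, v_k$ be the path inside the subtree of $A$ from $A$ to the leaf hosting $s$. Writing $c_i(N)$ for the number of children of $v_i$ whose subtrees contain at least one server of $N$, and $\ell_s(N)$ for the number of servers of $N$ located at the leaf $v_k$, the probability that the walk reaches $s$ factorizes as
\[
    \prod_{i=0}^{k-1} \frac{1}{c_i(N)} \cdot \frac{1}{\ell_s(N)} \,.
\]
Because $S' \subseteq S''$, each child subtree counted by $c_i(S')$ is also counted by $c_i(S'')$, so $c_i(S') \leq c_i(S'')$, and likewise $\ell_s(S') \leq \ell_s(S'')$. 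Every factor therefore weakly decreases when $S'$ is replaced by $S''$, delivering the monotonicity inequality required of a non-increasing rule.

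The main obstacle I anticipate is the initial structural claim — verifying both that $N_{\calA}(r \mid S)$ equals the available servers in the subtree of $A(r \mid S)$ and that this ancestor is uniquely recoverable from the closest-set; both hinge on the HST equidistance property combined with the minimality of $A(r \mid S)$. Once this is in place, locality is immediate from inspection of the subroutine, and monotonicity reduces to the product-of-reciprocals argument above.
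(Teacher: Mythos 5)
Your proposal is correct and follows essentially the same approach as the paper's proof: both establish that the root $A(r\mid S)$ of the relevant subtree is determined by the closest-set alone, factorize the probability of reaching $s$ as a product of reciprocal branching factors along the $A$-to-$s$ path, and observe that $S'\subseteq S''$ makes every factor's denominator weakly increase. The only cosmetic difference is that you derive the agreement of the two ancestor levels $\ell'=\ell''$ directly from the fact that all servers in a closest-set share the same LCA with $r$, whereas the paper reaches it by a short contradiction argument; the product-of-reciprocals monotonicity step is identical apart from indexing the path from $A$ downward rather than from $s$ upward.
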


The proof of \Cref{lem:random-subtree-tie} is deferred to \Cref{sec:proof-random-subtree-tie}.
Combining \Cref{thm:greedy-bal-to-unbal} and \Cref{lem:random-subtree-tie} with the competitive ratio of the Random Subtree algorithm in balanced markets~\cite{DBLP:conf/icalp/GuptaL12}, we obtain the following corollary.

\begin{corollary}\label{cor:adv-unbal-line}
    In the adversarial setting and unbalanced markets, there exists an $O(\log n)$-competitive algorithm for doubling metrics.
\end{corollary}
\section{Discussion and Future Directions}

In this paper, we present a generic framework for online metric matching to leverage the provided prediction of request locations, and we apply this framework to achieve new results in the semi-stochastic and prediction settings.

For the semi-stochastic setting and balanced markets, we give an $O(1)$-competitive algorithm for the Euclidean metric with smooth distributions and $d \neq 2$.
Nevertheless, it still remains open to give an $O(1)$-competitive algorithm for general metrics and distributions.
The potential approaches for accomplishing this goal comprise (1) improving the analysis of the $O((\log \log \log n)^2)$-competitive algorithm by \cite{DBLP:conf/icalp/GuptaGPW19}, or (2) applying \Cref{thm:framework-stocas-advers} and the SOAR algorithm to the simplified model given by \cite{DBLP:conf/icalp/GuptaGPW19}, where $\bbD$ is uniformly distributed over server locations.
The main technical barrier underlying the latter approach lies in characterizing $\optD(n)$ for general metrics.
In \Cref{sec:tree}, we further substantiate the potential of this approach by showing that it gives an $O(1)$-competitive algorithm for tree metrics and general distributions, recovering the result by \cite{DBLP:conf/icalp/GuptaGPW19} via the first approach.

Besides, for the semi-stochastic setting and unbalanced markets, we derive an algorithm with improved competitive ratios for the Euclidean metric with smooth distributions.
The resulting competitive ratio is optimal for $d \geq 3$, leaving the improvement for $d \leq 2$ an intriguing future direction.
Given that the line metric, i.e., $d = 1$, is typically the most tractable case \cite{DBLP:conf/icalp/GuptaGPW19,DBLP:conf/sigecom/BalkanskiFP23}, it is surprising that it becomes the most intractable with the presence of excess servers, rendering its resolution more appealing.

Moreover, for the adversarial setting without predictions, we show that the competitive ratio of every greedy algorithm with a local and non-increasing tie-breaking rule in balanced markets is preserved in unbalanced markets.
However, our characterization fails to capture all existing algorithms.
A natural subsequent step would be to generalize the performance of other algorithms, such as the algorithms by \cite{DBLP:journals/algorithmica/BansalBGN14,DBLP:conf/approx/Raghvendra16}, to unbalanced markets.

Several research avenues arise from the modeling assumptions in this paper. First, our results assume advance knowledge of $n$, raising the question of whether competitive guarantees are achievable without this information.
Second, the competitive ratios achieved by our algorithm in the semi-stochastic setting for unbalanced markets (\Cref{thm:ratio-ahg-excess}) depend polynomially on $\kappa := m / n$, and hence no longer remain constants when $\kappa = \omega(1)$.
Hence, it would be intriguing to achieve a constant competitive ratio in the regime $m = \omega(n)$.
Moreover, relaxing the assumptions that servers outnumber requests and all requests must be served gives rise to an interesting variant where request rejection is permitted—--a setting that bears similarities to the \emph{multi-unit prophet inequality} problem \cite{DBLP:conf/soda/JiangMZ22}.
Finally, it is appealing to relax the triangle inequality satisfied by the cost function; in \Cref{sec:general-cost}, we make progress toward this direction by generalizing some of our results to the case where the cost function only approximately satisfies the triangle inequality.

\bibliographystyle{alpha}
\bibliography{references}

\clearpage
\appendix

\section{Postponed Proofs in \Cref{sec:optoff-euc}}
\label{sec:proof-optoff-euc}

\subsection{Proof of \Cref{lem:optoff-smooth-lb}}
\label{sec:proof-lem-optoff-smooth-lb}

For $d = 1$, let $F(x) := \int_0^x \mu_{\mathbb{D}}(x) \mathrm{d}x$ for $x \in [0, 1]$ be the cumulative density function of $\mathbb{D}$.
    Let $S = \{s_1, \ldots, s_n\} \sim \mathbb{D}^n$ and $R = \{r_1, \ldots, r_n\} \sim \mathbb{D}^n$ be two sets of random points independently drawn from $\mathbb{D}$.
    Define $P_S := \{F(s_1), \ldots, F(s_n)\}$ and $P_R := \{F(r_1), \ldots, F(r_n)\}$, and each point in $P_S \cup P_R$ follows the uniform distribution over $[0, 1]$.
    Since the expected cost of the optimal matching between two sets of $n$ uniformly random points over $[0, 1]$ is $\Theta(\sqrt{n})$, we have
    \begin{align}\label{eqn:opt-uni-1d}
        \Expect_{S \sim \bbD^n, R \sim \mathbb{D}^n} \left[ \opt(P_S, P_R) \right]
        = \Theta\left(\sqrt{n}\right)\,.
    \end{align}
    Note that for all $s, r \in [0, 1]$, by the $\beta$-smoothness of $\bbD$,
    \begin{align}\label{eqn:mapping-dis}
        |F(s) - F(r)|
        = \left|\int_r^s f(x) \mathrm{d}x\right|
        \leq \left| \int_r^s \beta \mathrm{d}x \right|
        = \beta \cdot |s - r|\,.
    \end{align}
    Let $M$ be the perfect matching between $S$ and $R$ with the minimum cost, and let $M'$ be the perfect matching between $P_S$ and $P_R$ induced by $M$ such that $F(s) \in P_S$ is matched with $F(M(s)) \in P_R$ for each $s \in S$.
    It follows that
    \begin{align}\label{eqn:cost-matching-xy}
        \sum_{s \in S} |F(s) - F(M(s))|
        \geq \opt(P_S, P_R)\,.
    \end{align}
    As a result,
    \begin{align*}
        \opt(S, R)
        = \sum_{s \in S} |s - M(s)|
        \geq \sum_{s \in S} \frac{1}{\beta} \cdot |F(s) - F(M(s))|
        \geq \frac{1}{\beta} \cdot \opt(P_S, P_R)\,,
    \end{align*}
    where the equality holds by the definition of $M$, the first inequality holds by \eqref{eqn:mapping-dis}, and the second inequality holds by \eqref{eqn:cost-matching-xy}.
    Taking expectations on both sides, we obtain
    \begin{align*}
        \opt_{\mathbb{D}}(n)
        = \Expect_{S \sim \bbD^n, R \sim \mathbb{D}^n} \left[ \opt(S, R) \right]
        \geq \frac{1}{\beta} \cdot \Expect_{S \sim \bbD^n, R \sim \mathbb{D}^n} \left[ \opt(P_S, P_R) \right]
        \geq \Omega\left(\sqrt{n} / \beta\right) \,,
    \end{align*}
    where the last inequality holds by \eqref{eqn:opt-uni-1d}.
    This concludes the lower bound for $d = 1$.

The lower bound for $d \geq 2$ is proved by employing nearest-neighbor-distance, which is provided in the following lemma.

\begin{lemma}\label{lem:near-neigh-dis-smooth}
    Assume $\mathbb{D}$ to be a $\beta$-smooth distribution over $[0, 1]^d$ for $\beta \geq 1$.
    Then, for all $d \geq 1$ and $S \subseteq [0, 1]^d$,
    \begin{align*}
        \Expect_{r \sim \mathbb{D}} \left[ \min_{s \in S} \norm{s - r}_2 \right] \geq \Omega\left(\beta^{-\frac{1}{d}} |S|^{-\frac{1}{d}}\right)\,.
    \end{align*}
\end{lemma}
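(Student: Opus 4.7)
The plan is to lower bound the expected nearest-neighbor distance by a standard volume-packing argument, combined with a Markov-style tail bound that exploits the smoothness assumption on $\mathbb{D}$.

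First, I would fix a threshold $\rho > 0$ (to be chosen momentarily) and upper bound the probability that $r \sim \mathbb{D}$ falls within distance $\rho$ of some server in $S$. Let $B(s, \rho) = \{x \in [0,1]^d : \|x - s\|_2 \leq \rho\}$. The event $\{\min_{s \in S} \|s - r\|_2 \leq \rho\}$ is exactly the event that $r$ lies in $\bigcup_{s \in S} B(s, \rho)$. By a union bound and the fact that $\mathbb{D}$ has density at most $\beta$,
\begin{align*}
    \Prob_{r \sim \mathbb{D}}\!\left[\min_{s \in S}\|s - r\|_2 \leq \rho\right]
    \;\leq\; \sum_{s \in S} \Prob_{r \sim \mathbb{D}}\!\left[r \in B(s, \rho)\right]
    \;\leq\; \sum_{s \in S} \beta \cdot \Leb\!\left(B(s, \rho) \cap [0,1]^d\right)
    \;\leq\; n \cdot \beta \cdot c_d \rho^d,
\end{align*}
where $c_d$ is the volume of the unit ball in $\reals^d$.

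Next, I would choose $\rho$ so that the above probability is at most $1/2$: setting $\rho = \kappa \cdot n^{-1/d}$ with $\kappa = (2 \beta c_d)^{-1/d}$ suffices. Then, since $\min_{s \in S} \|s - r\|_2$ is a nonnegative random variable,
\begin{align*}
    \Expect_{r \sim \mathbb{D}}\!\left[\min_{s \in S}\|s - r\|_2\right]
    \;\geq\; \rho \cdot \Prob_{r \sim \mathbb{D}}\!\left[\min_{s \in S}\|s - r\|_2 > \rho\right]
    \;\geq\; \frac{\rho}{2}
    \;=\; \Omega\!\left(n^{-1/d}\right),
\end{align*}
since $\kappa$ and $\beta$ are constants depending only on $d$ and $\beta$.

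I do not anticipate any significant obstacle: the argument is essentially one line once the smoothness of $\mathbb{D}$ is used to translate the Lebesgue volume of $n$ balls of radius $\rho$ into a probability bound. The only mild subtlety is that the balls $B(s, \rho)$ may overlap, but the union bound already handles this; likewise, restricting to $[0,1]^d$ only decreases the volume, which is in the direction we want. The choice of $\rho$ balances the trivial bound $\Expect[\min_s \|s-r\|_2] \geq \rho \cdot \Prob[\text{no ball hit}]$ against the packing constraint $n \rho^d \lesssim 1$, and this is optimal up to constants, matching the regime $\rho \asymp n^{-1/d}$ expected from covering $[0,1]^d$ by $n$ balls.
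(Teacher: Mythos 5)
Your proof is correct, and it takes a cleaner and more elementary route than the paper's. You bound $\Prob_{r\sim\mathbb{D}}[\min_{s\in S}\|s-r\|_2\le\rho]$ by a union bound over the $n$ balls, use $\beta$-smoothness to convert each ball's probability mass to a volume bound $\beta c_d\rho^d$, choose $\rho\asymp n^{-1/d}$ to make the total $\le 1/2$, and then drop to the Markov-style lower bound $\Expect[\min]\ge\rho\cdot\Prob[\min>\rho]\ge\rho/2$. The paper instead conditions on which server $s$ is the nearest (essentially partitioning by Voronoi cell), writes $p_s=\Prob[O_{s,r}]$, integrates the tail $\Prob[O_{s,r}\wedge\|s-r\|_2\ge\ell]\ge p_s-C\ell^d$ over $\ell\in[0,(p_s/C)^{1/d}]$ to get $\Expect[\|s-r\|_2\mid O_{s,r}]\ge\Omega(p_s^{1/d})$, and then closes with $\sum_s p_s^{1+1/d}\ge n^{-1/d}$ by convexity of $x\mapsto x^{1+1/d}$ subject to $\sum_s p_s=1$. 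Your union-bound argument is shorter and avoids the conditional-expectation and convexity machinery, giving the same $\Omega(n^{-1/d})$ rate; the paper's decomposition gives a slightly finer picture (a per-server lower bound on the conditional nearest-neighbor distance) but buys nothing extra for this lemma. One small bookkeeping note: the bound $\Prob[r\in B(s,\rho)]\le\beta c_d\rho^d$ can exceed $1$ for small $n$, but it remains a valid (if vacuous) upper bound, and with your chosen $\rho$ the union bound equals exactly $1/2$, so the argument goes through for all $n\ge 1$.
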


\begin{proof}
    Fix $S \subseteq [0, 1]^d$.
    For all $s \in S$ and $\ell \geq 0$, let $B(s, \ell) := \{x \in [0, 1]^d \mid \norm{x - s}_2 \leq \ell\}$ denote the ball restricted in $[0, 1]^d$ with center $s$ and radius $\ell$, and, by the $\beta$-smoothness of $\bbD$,
    \begin{align*}
        \Prob_{r \sim \mathbb{D}} [\norm{s - r}_2 \leq \ell]
        = \int_{B(s, \ell)} \mu_{\mathbb{D}}(x) \mathrm{d}x
        \leq \int_{B(s, \ell)} \beta \mathrm{d}x
        \leq C \beta \cdot \ell^d
    \end{align*}
    for some constant $C > 0$.
    For all $s \in S$ and $r \in [0, 1]^d$, let $O_{s, r}$ denote the event that $s \in \argmin_{s' \in S} \norm{s' - r}_2$, and let $p_s := \Prob_{r \sim \mathbb{D}} [O_{s, r}]$.
    For all $s \in S$ and $0 < \ell \leq (p_s/ C \beta)^{1 / d}$, it holds that
    \begin{align*}
        \Prob_{r \sim \mathbb{D}} \left[ O_{s, r} \land (\norm{s - r}_2 \geq \ell) \right]
        &= \Prob_{r \sim \mathbb{D}} \left[ O_{s, r} \right] - \Prob_{r \sim \mathbb{D}} \left[ O_{s, r} \land (\norm{s - r}_2 \leq \ell) \right]\\
        &\geq p_s - \Prob_{r \sim \mathbb{D}} \left[ \norm{s - r}_2 \leq \ell \right]\\
        &\geq p_s - C \beta \cdot \ell^d\,.
    \end{align*}
    As a result, for every $s \in S$,
    \begin{align*}
        \Expect_{r \sim \mathbb{D}} \left[ \norm{s - r}_2 \mid O_{s, r} \right]
        &\geq \int_0^{(p_s / C \beta)^{1 / d}} \Prob_{r \sim \mathbb{D}} \left[\norm{s - r}_2 \geq \ell \mid O_{s, r} \right] \mathrm{d} \ell\\
        &= \frac{1}{p_s} \int_0^{(p_s / C \beta)^{1 / d}} \Prob_{r \sim \mathbb{D}} [O_{s, r} \land (\norm{s - r}_2 \geq \ell)] \mathrm{d} \ell\\
        &\geq \frac{1}{p_s} \int_0^{(p_s / C \beta)^{1 / d}} (p_s - C \beta \cdot \ell^d) \mathrm{d} \ell\\
        &= \frac{d}{d + 1} \cdot \left( \frac{p_s}{C \beta} \right)^{1 / d}\,.
    \end{align*}
    Therefore,
    \begin{align*}
        \Expect_{r \sim \mathbb{D}} \left[ \min_{s \in S} \norm{s - r}_2 \right]
        &= \sum_{s \in S} \Prob_{r \sim \mathbb{D}}[O_{s, r}] \cdot \Expect_{r \sim \mathbb{D}} [\norm{s - r}_2 \mid O_{s, r}] \\
        &\geq \sum_{s \in S} \Omega\left(\beta^{-\frac{1}{d}} p_s^{1 + \frac{1}{d}} \right)
        \geq \Omega\left(\beta^{-\frac{1}{d}} |S|^{-\frac{1}{d}}\right)\,,
    \end{align*}
    where the last inequality holds by the convexity of $f(x) = x^{1 + \frac{1}{d}}$ for $x \geq 0$.
\end{proof}

    Now, we are ready to prove the lower bound for $d \geq 2$.
    By \Cref{lem:near-neigh-dis-smooth},
    \begin{align*}
        \opt_{\mathbb{D}}(n)
        &= \Expect_{S \sim \bbD^n, R \sim \mathbb{D}^n} \left[ \opt(S, R) \right]
        = \Expect_{S \sim \mathbb{D}^n} \left[ \Expect_{R \sim \mathbb{D}^n} \left[ \opt(S, R) \right] \right]\\
        &\geq \Expect_{S \sim \mathbb{D}^n} \left[n \cdot \Expect_{r \sim \mathbb{D}} \left[ \min_{s \in S} \norm{s - r}_2 \right] \right]
        \geq \Omega\left(\beta^{-\frac{1}{d}} n^{1 - \frac{1}{d}}\right)\,,
    \end{align*}
    which concludes the proof.

\subsection{Proof of \Cref{lem:lb-optoff-excess}}
\label{sec:proof-lem-lb-optoff-excess}

By \Cref{lem:near-neigh-dis-smooth} and the assumption that $|S| = \kappa n$,
    \begin{align*}
        \opt_{\mathbb{D}}(S, n)
        = \Expect_{R \sim \mathbb{D}^n} \left[ \opt(S, R) \right]
        \geq n \cdot \Expect_{r \sim \mathbb{D}} \left[ \min_{s \in S} \norm{s - r}_2 \right]
        \geq \Omega\left((\beta\kappa)^{-\frac{1}{d}} n^{1 - \frac{1}{d}}\right) \,,
    \end{align*}
    as desired.
\section{Postponed Proofs in \Cref{sec:apply-hst}}
\label{sec:proof-greedy}

\subsection{Proof of \Cref{lem:tie-BBGN}}
\label{sec:proof-tie-BBGN}

    We use $\calA$ to denote the greedy subroutine in the MNP algorithm.
    The tie-breaking rule of $\calA$ is local by definition, and it remains to show that it is non-increasing.
    Recall that $N_{\calA}(r \mid S)$ denotes the set of available servers closest to the request $r$ upon its arrival when the set of initial servers is $S$.
    For all sets of servers $S', S''$ with $S' \subseteq S''$, for every $s \in S'$,
    \begin{align*}
        \prob{\calA \text{ matches } r \text{ to } s \mid N_{\calA}(r \mid S) = S'}
        &= \frac{1}{|S'|} \,,\\
        \prob{\calA \text{ matches } r \text{ to } s \mid N_{\calA}(r \mid S) = S''}
        &= \frac{1}{|S''|} \,. 
    \end{align*}
    Therefore,
    \begin{align*}
        \prob{\calA \text{ matches } r \text{ to } s \mid N_{\calA}(r \mid S) = S'}
        \geq \prob{\calA \text{ matches } r \text{ to } s \mid N_{\calA}(r \mid S) = S''} \,,
    \end{align*}
    implying that the tie-breaking rule of $\calA$ is non-increasing.

\subsection{Proof of \Cref{lem:random-subtree-tie}}
\label{sec:proof-random-subtree-tie}

    We use $\calA$ to denote the greedy subroutine in the Random Subtree algorithm.
    The tie-breaking rule of $\calA$ is local by definition, and it remains to show that it is non-increasing.
    For all leaf node $x$ of the HST and $\ell \geq 0$, denote $A_{\ell}(x)$ as the $\ell$-th ancestor of $x$ and $T_{\ell}(x)$ as the subtree rooted at $A_{\ell}(x)$.
    That is, $A_0(x)$ is $x$, $A_1(x)$ is the parent of $x$, and so on.
    Fix a request $r$.
    For all sets of servers $S', S''$ with $S' \subseteq S''$, let $\ell'$ satisfy $A_{\ell'}(r) = A(r \mid S)$ when $N_{\calA}(r \mid S) = S'$, and let $\ell''$ satisfy $A_{\ell''}(r) = A(r \mid S)$ when $N_{\calA}(r \mid S) = S''$.
    Note that $\ell'$ and $\ell''$ only depend on $S'$ and $S''$, respectively.

    We first show that $\ell' = \ell''$.
    Assume for contradiction that $\ell' \neq \ell''$. 
    We only analyze the case where $\ell' < \ell''$, and the case where $\ell' > \ell''$ can be handled analogously.
    When $\ell' < \ell''$, all servers in $S'$ are located in $T_{\ell'}(r)$, and all servers in $S''$ are located in $T_{\ell''}(r)$.
    Moreover, by the definition of $\ell''$, no server in $S''$ is located in $T_{\ell'' - 1}(r)$.
    Since $T_{\ell'}(r)$ is a subtree in $T_{\ell'' - 1}(r)$, we have $S' \cap S'' = \emptyset$, contradicting the assumption that $S' \subseteq S''$.

    Fix a server $s \in S'$.
    For every $i \in [\ell']$, denote $w_i'$ as the number of subtrees of $A_i(s)$ containing at least one available server upon the arrival of $r$ when $N_{\calA}(r \mid S) = S'$.
    Moreover, define $w_0'$ as the number of available servers collocated with $s$ (including $s$) upon the arrival of $r$ when $N_{\calA}(r \mid S) = S'$.
    Recall that $\ell' = \ell''$, and define $w_i''$ for every $i \in \{0, 1, \ldots, \ell'\}$ similarly when $N_{\calA}(r \mid S) = S''$.
    By the tie-breaking rule of $\calA$,
    \begin{align*}
        \prob{\calA \text{ matches } r \text{ to } s \mid N_{\calA}(r \mid S) = S'}
        &= \prod_{i=0}^{\ell'} \frac{1}{w_i'} \, ,\\
        \prob{\calA \text{ matches } r \text{ to } s \mid N_{\calA}(r \mid S) = S''}
        &= \prod_{i=0}^{\ell'} \frac{1}{w_i''} \,.
    \end{align*}
    Since $S' \subseteq S''$, we have $w_i' \leq w_i''$ for every $i \in \{0, 1, \ldots, \ell'\}$.
    Therefore,
    \begin{align*}
        \prob{\calA \text{ matches } r \text{ to } s \mid N_{\calA}(r \mid S) = S' }
        \geq \prob{\calA \text{ matches } r \text{ to } s \mid N_{\calA}(r \mid S) = S''}\,,
    \end{align*}
    implying that the tie-breaking rule of $\calA$ is non-increasing.

\section{Semi-Stochastic Setting: An Alternative $O(1)$-Competitive Algorithm for Tree Metrics with General Distributions}
\label{sec:tree}

In this section, we consider balanced markets and tree metrics with general distributions.
In particular, a tree metric is defined by the shortest path distances in a (weighted) tree $T = (V, E, w)$.
We will prove that the SOAR algorithm given in \Cref{lem:matching-cost-soar} is $O(1)$-competitive in the stochastic setting, which leads to an $O(1)$-competitive algorithm in the semi-stochastic setting by \Cref{thm:framework-stocas-advers}, recovering the result by \cite{DBLP:conf/icalp/GuptaGPW19}.

\begin{theorem}\label{thm:uni-metric-adv}
    Suppose that $m = n$.
    Then, there exists an $O(1)$-competitive algorithm for the semi-stochastic setting and tree metrics with general distributions.
\end{theorem}

To prove \prettyref{thm:uni-metric-adv}, we utilize the result by \cite{DBLP:conf/icalp/GuptaGPW19}, which holds for any metric, that only a constant factor in the competitive ratio is lost by assuming that $\bbD$ is the uniform distribution over server locations.
That is, for every server $s \in S$, $\Prob_{r \sim \bbD}[r = s] = 1/n$.

\begin{lemma}[Lemma 2.1 in \cite{DBLP:conf/icalp/GuptaGPW19}]\label{lem:nonuni-dis}
    Suppose that $m = n$ and consider the semi-stochastic setting.
    Given an $\alpha$-competitive algorithm for the uniform distribution over server locations, we can construct a $(2\alpha + 1)$-competitive algorithm for general distributions.
\end{lemma}

Let $T = (V, E, w)$ be the tree that induces the tree metric.
Given \prettyref{lem:nonuni-dis}, it suffices to consider distribution $\mathbb{D}$ of the following form: Each node $v \in V$ is associated with an integer $a_v \geq 0$ such that $\sum_{v \in V} a_v = n$ and $\Prob_{r \sim \mathbb{D}}[r = v] = a_v / n$.
For each edge $e \in E$, let $T_1(e)$ and $T_2(e)$ be two subtrees of $T$ obtained by deleting $e$, and define $n_e := \sum_{v \in T_1(e)} a_v$.
We give a tight characterization of $\optD(t)$ in the following lemma.

\begin{lemma}\label{lem:uni-metric-optoff}
    For each $t \in [n]$, $\optD(t) = \Theta \left( \frac{\sqrt{t}}{n} \sum_{e \in E} w_e \sqrt{n_e (n - n_e)} \right)$.
\end{lemma}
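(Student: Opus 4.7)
The plan is to reduce computing $\opt(t)$ to an edgewise expectation by using the classical tree-metric identity $\opt(S,R) = \sum_{e \in E} d_e \cdot |S_e - R_e|$, where $S_e$ and $R_e$ denote the number of servers and requests, respectively, that land in the subtree $T_1(e)$. Since the $t$ servers and $t$ requests are drawn i.i.d.\ from $\mathbb{D}$ and each independently lies in $T_1(e)$ with probability $p_e \triangleq n_e/n$, the counts $S_e$ and $R_e$ are independent $\mathrm{Binomial}(t,p_e)$ random variables, so the centered difference $X_e \triangleq S_e - R_e$ has mean $0$ and variance $2tp_e(1-p_e) = 2t\,n_e(n-n_e)/n^2$. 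Taking expectations gives $\opt(t) = \sum_{e \in E} d_e \cdot \E[|X_e|]$, reducing the problem to tight two-sided estimates of $\E[|X_e|]$.

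For the upper bound I would apply Cauchy--Schwarz to get $\E[|X_e|] \le \sqrt{\E[X_e^2]} = \sqrt{2t\,n_e(n-n_e)}/n$; multiplying by $d_e$ and summing over edges immediately yields
\[
\opt(t) \;\le\; \frac{\sqrt{2t}}{n} \sum_{e \in E} d_e \sqrt{n_e(n-n_e)},
\]
which establishes the $O$-direction. For the matching lower bound, I would write $X_e = \sum_{i=1}^{t} W_i^{(e)}$ as a sum of i.i.d.\ symmetric $\{-1,0,+1\}$-valued steps with $\Pr[W_i^{(e)} = \pm 1] = p_e(1-p_e)$, and then invoke a Khintchine--Kahane-type anti-concentration inequality, or equivalently a Berry--Esseen/CLT argument applied to the walk conditioned on its number of nonzero steps, to deduce $\E[|X_e|] \gtrsim \sqrt{t\,p_e(1-p_e)}$ whenever this quantity is at least an absolute constant. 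Summing the resulting per-edge lower bounds with weights $d_e$ then recovers the $\Omega$-direction.

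The main obstacle is the sparse regime in which $tp_e(1-p_e) \ll 1$: there the per-edge expectation $\E[|X_e|]$ is of order $tp_e(1-p_e)$ rather than $\sqrt{tp_e(1-p_e)}$, so a naive edgewise lower bound falls short of the target. I would address this either by aggregating the sparse edges and exploiting the constraint $\sum_v a_v = n$ (which limits how many edges can simultaneously have small $p_e(1-p_e)$) to argue that their combined contribution cannot exceed the contribution of the bulk edges by more than a constant factor, or alternatively by deploying a second-moment Paley--Zygmund-style estimate at the level of the whole sum $\sum_e d_e|X_e|$, using the independence of $(S_e,R_e)$ across the $t$ draws to control its variance and fourth moment. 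Combining the edgewise upper bound with this refined lower bound then yields the claimed $\Theta$-estimate.
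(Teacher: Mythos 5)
Your decomposition $\opt(t) = \sum_{e \in E} d_e \cdot \Expect[|P_e - Q_e|]$ and the Cauchy--Schwarz upper bound are exactly what the paper does (the paper first splits $|P_e - Q_e| \le |P_e - \Expect P_e| + |Q_e - \Expect Q_e|$ and bounds each mean absolute deviation by the standard deviation, but this is the same up to a factor of $\sqrt 2$).

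For the lower bound you have correctly identified the genuine difficulty. The paper's proof passes from $\Expect[|P_e - Q_e|]$ to $\Expect[|P_e - \Expect P_e|]$ and then cites a result of Berend and Kontorovich asserting $\Expect[|Z - \Expect Z|] \ge \std(Z)/\sqrt 2$ for $Z \sim \Binom(t,p)$, but that result carries the precondition $p \in [1/t,\, 1-1/t]$, which here reads $t\,n_e \ge n$ and $t(n-n_e) \ge n$. For small $t$ this fails at some edges --- precisely the sparse regime you flagged --- and the paper's proof does not address it.

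Your two proposed repairs will not work, however, because the $\Omega$ half of the lemma is in fact false for small $t$. Take a star with $n$ leaves, one unit of mass $a_v = 1$ per leaf, and unit edge lengths; every edge has $n_e = 1$, so the claimed lower bound is $\Omega(\sqrt{tn})$. But for $t \le n/4$ each count $P_e, Q_e$ is typically $0$ or $1$ and $\Expect[|P_e - Q_e|] = \Theta(t/n)$, giving $\opt(t) = \Theta(t)$, which is strictly smaller for $t = o(n)$ (at $t = \sqrt n$ the claim is $\Theta(n^{3/4})$ versus the true $\Theta(\sqrt n)$). Neither aggregation over sparse edges nor a Paley--Zygmund argument can overcome a false statement. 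What is both provable and sufficient for the downstream use in Theorem~\ref{thm:uni-metric-adv} is the $O$-bound for all $t$ (used inside the sum $\sum_t \opt(t)/t$) together with the $\Omega$-bound at $t = n$, where the Berend--Kontorovich precondition holds for every contributing edge since $n_e \in \{1,\dots,n-1\}$; your CLT/Khintchine argument would prove that case equally well, so on the part of the lemma that is actually true, the two routes coincide.
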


\begin{proof}
    Fix $t \in [n]$.
    For each $e \in E$, let $P_e \sim \Binom(t, n_e / n)$ and $Q_e \sim \Binom(t, n_e / n)$ be random variables denoting the number of servers located in $T_1(e)$ and the number of requests located in $T_1(e)$, respectively.
    Given a realization of $(P_e, Q_e)_{e \in E}$, in the optimal perfect matching between servers and requests, for each edge $e \in E$, exactly $|P_e - Q_e|$ matched server-request pairs will utilize $e$ in their paths, incurring a cost of $w_e \cdot |P_e - Q_e|$.
    Hence,
    \begin{align*}
        \optD(t)
        = \Expect \left[ \sum_{e \in E} w_e \cdot |P_e - Q_e| \right]
        = \sum_{e \in E} w_e \cdot \Expect [|P_e - Q_e|]\,.
    \end{align*}
    
    The lower bound for $\optD(t)$ relies on the following probabilistic bound.
    
\begin{lemma}[\cite{berend2013sharp}]\label{lem:binom-pro-bound}
        Let $Z \sim \Binom(n, p)$, with $n \geq 2$ and $p \in [1 / n, 1 - 1 / n]$.
        Then,
        \begin{align*}
            \Expect[|Z - \Expect Z|]
            \geq \std(Z) / \sqrt{2}\,,
        \end{align*}
        where $\std(Z)$ denotes the standard deviation of $Z$.
    \end{lemma}

    We lower bound $\optD(t)$ by
    \begin{align*}
        \optD(t)
        &= \sum_{e \in E} w_e \cdot \Expect[|P_e - Q_e|]
        \geq \sum_{e \in E} w_e \cdot \Expect [|P_e - \Expect P_e|]\\
        &\geq \frac{1}{\sqrt{2}} \sum_{e \in E} w_e \cdot \std(P_e)
        = \Omega \left( \frac{\sqrt{t}}{n} \sum_{e \in E} w_e \sqrt{n_e (n - n_e)} \right)\,,
    \end{align*}
    where the first inequality holds by Jensen's inequality, and the second inequality holds by \Cref{lem:binom-pro-bound}.
    Moreover, $\optD(t)$ is upper bounded by
    \begin{align*}
        \optD(t)
        &= \sum_{e \in E} w_e \cdot \Expect[|P_e - Q_e|]
        \leq \sum_{e \in E}w_e \left( \Expect \left[\left|P_e - \Expect P_e \right | \right] + \Expect \left[ \left |Q_e - \Expect Q_e \right| \right] \right)\\
        &\leq \sum_{e \in E} w_e (\std(P_e) + \std(Q_e))
        = O \left( \frac{\sqrt{t}}{n} \sum_{e \in E} w_e \sqrt{n_e (n - n_e)} \right)\,,
    \end{align*}
    where the first inequality holds since $\Expect P_e = \Expect Q_e$ for every $e \in E$, and the second inequality holds by Jensen's inequality.
\end{proof}

Now, we are ready to prove \Cref{thm:uni-metric-adv}.

\begin{proof}[Proof of \Cref{thm:uni-metric-adv}]
    We assume $\bbD$ to be the uniform distribution over server locations, which only causes an additional constant factor in the competitive ratio by \Cref{lem:nonuni-dis}.
    In other words, let $a_v$ be the number of servers located at $v \in V$, and it holds that $\Prob_{r \sim \mathbb{D}} [r = v] = a_v / n$.
    By \Cref{thm:framework-stocas-advers}, it suffices to give an $O(1)$-competitive algorithm in the stochastic setting.

    Let $\calA$ be the algorithm as per \Cref{lem:matching-cost-soar}.
    By \Cref{lem:uni-metric-optoff},
    \begin{align*}
        \costD(\calA; n)
        &= \sum_{t=1}^n \frac{1}{t} \cdot \optD(t) \\
        &= O \left( \sum_{t=1}^n \frac{1}{n\sqrt{t}} \sum_{e \in E} w_e \sqrt{n_e(n - n_e)} \right)\\
        &= O \left( \frac{1}{n} \sum_{e \in E} w_e \sqrt{n_e (n - n_e)} \sum_{t=1}^n \frac{1}{\sqrt{t}} \right) \\
        &= O \left( \frac{1}{\sqrt{n}} \sum_{e \in E} w_e \sqrt{n_e (n - n_e)} \right) \\
        &= O\left( \optD(n) \right)\,,
    \end{align*}
    concluding that $\calA$ is $O(1)$-competitive in the stochastic setting.
\end{proof}
\section{Semi-Stochastic Setting: Beyond Triangle Inequality}
\label{sec:general-cost}

In the remainder of the paper, we consider cost functions that satisfy the triangle inequality.
In this section, we generalize some of our results to the case where the cost function only approximately satisfies the triangle inequality.
Specifically, we say that a cost function $\delta: X \times X \to \reals_{\geq 0}$ satisfies the \emph{$\eta$-approximate triangle inequality} for $\eta \geq 1$ if for all $x, y, z \in X$,
\begin{align*}
    \delta(x, z) \leq \eta \cdot (\delta(x, y) + \delta(y, z)) \,.
\end{align*}

We first generalize \Cref{thm:framework-stocas-advers} to the case where $\delta$ satisfies the approximate triangle inequality.

\begin{theorem}\label{thm:framework-stocas-advers-approx-tri}
    Suppose that the cost function $\delta$ satisfies the $\eta$-approximate triangle inequality for $\eta \geq 1$.
    Then, for every distribution $\mathbb{D}$, given an algorithm $\calA$ for online metric matching, there exists an algorithm $\calA'$ such that $\costD(\calA'; S, n) \leq \eta \cdot (\optD(S, n) + \costD(\calA; n))$ for every set of servers $S$ with $|S| \geq n$.
    Moreover, for balanced markets, if $\calA$ is $\alpha$-competitive in the stochastic setting, then $\calA'$ is $(2\alpha \eta^2 + \eta)$-competitive in the semi-stochastic setting.
\end{theorem}

\Cref{thm:framework-stocas-advers-approx-tri} is proved by replacing every usage of the triangle inequality by the approximate triangle inequality in the proofs of Theorems~\ref{thm:alg-with-addi-info} and~\ref{thm:framework-stocas-advers}, and hence the proof is omitted.

We focus on the case where $X = [0, 1]^d$ and $\delta(x, y) = \norm{x - y}_2^p$ for $p \geq 1$, which we prove that satisfies the $2^{p - 1}$-approximate triangle inequality.
More generally, the following stronger statement holds: For $p \geq 1$, the cost function $\delta(x, y) := \delta'(x, y)^p$, where $\delta'$ satisfies the triangle inequality, satisfies the $2^{p - 1}$-approximate triangle inequality.

\begin{lemma}\label{lem:appro-tri-ineq}
    Suppose that $\delta: X \times X \to \reals_{\geq 0}$ satisfies the triangle inequality.
    Then, the cost function $\delta(x, y) := \delta'(x, y)^p$ for $p \geq 1$ satisfies the $2^{p - 1}$-approximate triangle inequality.
\end{lemma}

\begin{proof}
    For all $x, y, z \in X$,
    \begin{align*}
        \delta(x, z)
        &= \delta'(x, z)^p
        \leq (\delta'(x, y) + \delta'(y, z))^p
        = \left( \frac{\delta'(x, y)}{2} + \frac{\delta'(y, z)}{2} \right)^p \cdot 2^p\\
        &\leq \left( \frac{\delta'(x, y)^p}{2} + \frac{\delta'(y, z)^p}{2} \right) \cdot 2^p
        = (\delta(x, y) + \delta(y, z)) \cdot 2^{p - 1}\,,
    \end{align*}
    where the first inequality holds by the triangle inequality, and the second inequality holds by the convexity of $f(x) = x^p$ for $p \geq 1$.
\end{proof}

Assuming $\mathbb{D}$ to be the uniform distribution over $[0, 1]^d$, we give an $O(1)$-competitive algorithm for $d \geq 1$ and $p \in [1, \max\{d, 2\})$ in the semi-stochastic setting.

\begin{theorem}\label{thm:cost-approx-tri-cr}
    Suppose that $m = n$ and $\mathbb{D}$ is the uniform distribution over $[0, 1]^d$.
    Let $\delta(x, y) = \norm{x - y}_2^p$ for $p \in [1, \max\{d, 2\})$.
    Then, for $d \geq 1$, there exists an $O(1)$-competitive algorithm in the semi-stochastic setting.
\end{theorem}

\subsection{Proof of \Cref{thm:cost-approx-tri-cr}}

We first recall the following estimate of $\optD(n)$.

\begin{lemma}[\cite{ledoux2019optimal}]\label{lem:opt-euc-uni-p}
    Let $\bbD$ be the uniform distribution over $[0, 1]^d$, and let $\delta(x, y) = \norm{x - y}_2^p$.
    Then, for $d, p \geq 1$,
    \begin{align*}
        \optD(n) =
        \begin{cases}
            \Theta (n^{1 - \frac{p}{2}}) & d = 1\\
            \Theta ( n^{1 - \frac{p}{2}}(\log n)^{\frac{p}{2}} ) & d = 2\\
            \Theta(n^{1 - \frac{p}{d}}) & d \geq 3
        \end{cases}\,.
    \end{align*}
\end{lemma}

Now, we are ready to prove \Cref{thm:cost-approx-tri-cr}.
    Let $\calA$ be the algorithm as per \Cref{lem:matching-cost-soar}, and we analyze $\costD(\calA; n)$ for $d = 1$, $d = 2$, and $d \geq 3$ separately by applying \Cref{lem:opt-euc-uni-p}.
    For $d = 1$ and $p \in [1, 2)$,
    \begin{align*}
        \costD(\calA; n)
        = \sum_{t=1}^n \frac{1}{t} \cdot \optD(t)
        = O \left( \sum_{t=1}^n t^{-\frac{p}{2}} \right)
        = O \left( \int_1^n x^{-\frac{p}{2}} \mathrm{d}x \right)
        = O\left( n^{1 - \frac{p}{2}} \right)
        = O\left( \optD(n) \right)\,.
    \end{align*}
    For $d = 2$ and $p \in [1, 2)$,
    \begin{align*}
        \costD(\calA; n)
        &= \sum_{t=1}^n \frac{1}{t} \cdot \optD(t)
        = O \left( \sum_{t=1}^n t^{-\frac{p}{2}} (\log t)^{\frac{p}{2}} \right)
        = O \left( (\log n)^{\frac{p}{2}} \int_1^n x^{-\frac{p}{2}} \mathrm{d}x \right)\\
        &= O \left( n^{1 - \frac{p}{2}} (\log n)^{\frac{p}{2}} \right)
        = O \left( \optD(n) \right)\,.
    \end{align*}
    For $d \geq 3$ and $p \in [1, d)$,
    \begin{align*}
        \costD(\calA; n)
        = \sum_{t=1}^n \frac{1}{t} \cdot \optD(t)
        = O \left( \sum_{t=1}^n t^{-\frac{p}{d}} \right)
        = O \left( \int_1^n x^{-\frac{p}{d}} \mathrm{d}x \right)
        = O \left( n^{1 - \frac{p}{d}} \right)
        = O \left( \optD(n) \right)\,.
    \end{align*}
    As a result, $\calA$ is $O(1)$-competitive in the stochastic setting.
    By \Cref{lem:appro-tri-ineq}, $\delta$ satisfies the $2^{p - 1}$-approximate triangle inequality.
    Combining with \Cref{thm:framework-stocas-advers-approx-tri} leads to an $O(1)$-competitive algorithm in the semi-stochastic setting.

\end{document}